\def\thick#1{\hbox{\rlap{$#1$}\kern0.25pt\rlap{$#1$}\kern0.25pt$#1$}}
\def\spa{\mathcal{S}}
\def\spaint{\mathring{\spa}}
\def\Hint{\mathring{H}}
\def\bbeta{{\boldsymbol\beta}}
\def\bdeta{{\boldsymbol\eta}}
\def\btheta{{\boldsymbol\theta}}
\def\bTheta{{\boldsymbol\Theta}}
\def\indic{\mathds{1}}
\def\real{{\mathbb R}}
\def\prob{{\mathbb P}}
\def\nat{{\mathbb N}}
\def\lik{\mathcal{L}}
\def\spA{\mathcal{A}}
\def\spH{\mathcal{H}}
\def\spP{\mathcal{P}}
\def\spE{\mathscr{E}}
\def\setB{\mathscr{B}}
\newcommand{\diff}{\mathrm{d}}
\newcommand{\betaf}{\mathrm{Be}}
\newcommand{\by}{{\boldsymbol{y}}}
\newcommand{\bY}{{\boldsymbol{Y}}}
\newcommand{\bZ}{{\boldsymbol{Z}}}
\numberwithin{equation}{section}
\theoremstyle{definition}
\newtheorem{defi}{Definition}[section]
\theoremstyle{plain}
\newtheorem{prop}[defi]{Proposition}
\newtheorem{cor}[defi]{Corollary}
\newtheorem{algo}[defi]{Algorithm}
\newtheorem{assum}[defi]{Assumption}
\theoremstyle{remark}
\begin{document}

\begin{frontmatter}

\title{Bayesian Inference for the Extremal Dependence}
\runtitle{Bayesian Inference for the Extremal Dependence}

\begin{aug}

\author{\fnms{Giulia} \snm{Marcon}\ead[label=e1]{giulia.marcon@unipv.it}}
\address{Department of Political and Social Sciences, University of Pavia,\\ Corso Strada Nuova 65, 27100, Italy \\\printead{e1}}
\author{\fnms{Simone A.} \snm{Padoan}\corref{}\ead[label=e2]{simone.padoan@unibocconi.it}}
\and
\author{\fnms{Isadora} \snm{Antoniano-Villalobos}\ead[label=e3]{isadora.antoniano@unibocconi.it}}
\address{Department of Decision Sciences, Bocconi University, Milan,\\ via Roentgen 1, 20136, Italy \\ \printead{e2,e3}}

\runauthor{G. Marcon et al.}

\end{aug}

\begin{abstract}
A simple approach for modeling multivariate extremes is to consider
the vector of component-wise maxima and their max-stable distributions.
The extremal dependence can be inferred by estimating
the angular measure or, alternatively, 
the Pickands dependence function. We propose a nonparametric Bayesian model that allows, in the bivariate case, the simultaneous estimation of both functional representations through the use of polynomials in the Bernstein form.
The constraints
required to provide a valid extremal dependence are addressed in a straightforward manner, by placing a prior on the coefficients of the Bernstein polynomials which gives probability one to the set of valid functions. The prior is extended to the polynomial degree, making our approach fully nonparametric.
Although the analytical expression of the posterior is unknown, inference is possible via a trans-dimensional MCMC scheme.
We show the efficiency of the proposed methodology 
by means of a simulation study.   
The extremal behaviour of log-returns of daily
exchange rates between the Pound Sterling vs the U.S. Dollar and the Pound Sterling vs the Japanese Yen is analysed for illustrative purposes.

\end{abstract}

\begin{keyword}[class=MSC]
\kwd{62G05}
\kwd{62G07} \kwd{62G32}
\end{keyword}

\begin{keyword}
\kwd{Generalised extreme value distribution} \kwd{Extremal dependence} \kwd{Angular measure} \kwd{Max-stable distribution} \kwd{Bernstein polynomials} \kwd{Bayesian nonparametrics} \kwd{Trans-dimensional MCMC} \kwd{Exchange rate}
\end{keyword}

\received{\smonth{12} \syear{2015}}

\tableofcontents

\end{frontmatter}

\maketitle

\section{Introduction}\label{sec:intro}

The estimation of future extreme episodes of a real process, such as heavy-rainfall, heat-waves and simultaneous losses in the financial market, is of crucial importance for risk management. In most applications, an accurate assessment of such types of risks requires an appropriate modelling and inference of the dependence structure of multiple extreme values.

A simple definition of multiple extremes is obtained by applying the definition of block (or partial)-maximum \cite[Ch. 3]{coles01} to each of the variables considered.
Then, the probabilistic modelling concerns the joint distribution of the random vector of so-called component-wise (block) maxima, 
in short {\it sample maxima}, whose joint distribution is named a 
\emph{multivariate extreme value distribution} \cite[Ch. 6]{dehaan+f06}. 
Within this approach, parametric models for the dependence structure have been
widely discussed and applied in the literature 
(e.g. \citeNP{coles01}, \citeNP{boris+p2015}), but a major downside is that a model which may be useful for a specific application is often too restrictive for many others. 
As a consequence, more recently, much attention has been devoted to the study of nonparametric estimators or estimation methods for assessing the extremal dependence (see e.g. \citeNP[Ch. 7]{dehaan+f06}). 
Some examples focused on nonparametric estimators of the Pickands dependence function \cite{pickands81} are provided in \shortciteN{cap+f+g97}, 
\citeN{genest2009rank}, \shortciteN{bucher2011},
\shortciteN{berghaus2013} and \shortciteN{marcon+p+n+m15}, among others. 
Examples of Bayesian modelling of the extremal dependence are \citeN{boldi+d07}, \citeN{guillotte2008} and \citeN{sabourin2014} to cite a few.

In order to provide a comprehensive discussion of our approach, we restrict our attention to the bivariate case, that is to two-dimensional vectors of sample maxima.
Specifically, we describe how Bernstein polynomials (\citeNP{lorentz53}) can be used to model the extremal dependence within a Bayesian nonparametric framework. 
In recent years, Bernstein polynomials are attracting much attention
in Bayesian nonparametric statistics, in that they are useful as prior distributions of distribution functions (\citeNP{petrone99}), 
for density estimation (\citeNP{petrone1999bayesian}) and they have nice properties (\citeNP{ghosal2001}, \citeNP{petrone2002}).

Our present proposal has the following key features  that make it different from \shortciteN{marcon+p+n+m15}.
Firstly, the use of this particular polynomial expansion makes it possible to accommodate different representations of the dependence structure, such as the Pickands  dependence function and the so-called angular (or spectral) measure. This ensures that in each case, there is the fulfillment of some specific constraints which guarantee that a proper extreme value distribution is defined. Secondly, 
model fitting, inference and model assessment can be achieved via MCMC methods, preserving the relation between both extremal dependence forms.
Information about the polynomial degree is yielded from the data as part of the inferential procedure, and there is no need for a preliminary estimate as is often the case when regularization methods are applied, (e.g. \shortciteNP{fil+g+s08}, \shortciteNP{marcon+p+n+m15}). Additionally, there is no need to choose between representing the dependence by means of the angular measure or the Pickands dependence function.
Finally, the expression of approximate probabilities for simultaneous exceedances can be derived in closed-form and this implies that the predictive probability for such events is easy to calculate.

The paper is organised as follows. 
In Section~\ref{sec:extdep} we briefly describe some basic concepts regarding the extremal dependence structure.
In Section~\ref{sec:bayeisan} we propose a Bayesian nonparametric model for the extremal dependence along with an MCMC approach for posterior simulation. Section~\ref{sec:simu} illustrates the flexibility of the proposed approach by estimating the dependence structure of data simulated from some
popular parametric dependence models.
Section ~\ref{sec:app} provides a real data application, in which we analyse the exchange rates of the Pound Sterling against the U.S. Dollar and Japanese Yen, jointly, at extremal levels during the past few decades.

\section{Extremal Dependence}\label{sec:extdep}

In this section, we present some main ideas regarding multivariate extreme value theory, which we use for the development of the framework we propose. For more details see e.g. Chapters 4, 6 and 8 of \shortciteN{falk+h+r10}, \shortciteN{dehaan+f06} and \shortciteN{beirlant+g+s+t04}, respectively.

Assume that $\bZ=(Z_1,Z_2)$ is a bivariate random vector of sample maxima with an extreme value distribution $G$.
A distribution as such has the attractive feature of being {\it max-stable}, that is for all $n=1,2,\ldots$, there exist sequences 
of constants $a_n,c_n>0$ and $b_n,d_n\in\real$ such that $G^n(a_nz_1+b_n,c_nz_2+d_n)=G(z_1,z_2)$, for
all $z_1,z_2\in\real$.
Hereafter, we refer to $G$ as a bivariate max-stable distribution.
In particular, the margins of $G$,
denoted by $G_i(z)=\prob(Z_i\leq z)$, for all $z\in \real$ and $i=1,2$, 
are members of the Generalised Extreme Value (GEV) distribution \cite[Ch. 3]{coles01}, i.e.
\begin{equation}\label{eq:gev}
G_i(z_i;\mu_i,\sigma_i,\xi_i)=\exp
\left\{
-\left(
1+\xi_i\frac{z_i-\mu_i}{\sigma_i}
\right)_+^{-1/\xi_i}
\right\},
\end{equation}
where $z_i,\mu_i,\xi_i\in\real$, $\sigma_i>0$ for $i=1,2$ and $(x)_+=\max(0,x)$ and,
hence, are univariate max-stable distributions.
Taking the transformation, with the marginal parameters assumed to be known,
\begin{equation}\label{eq:gev_frechet}
Y_i=\left(1+\xi_i\frac{Z_i-\mu_i}{\sigma_i}
\right)_+^{-1/\xi_i},\quad i=1,2,
\end{equation}
then, the marginal distributions of $\bY=(Y_1,Y_2)$ are unit Fr\'{e}chet, i.e. $\prob(Y_i\leq y)=e^{-1/y}$, for all $y>0$ with $i=1,2$,
and the bivariate max-stable distribution takes the form
\begin{equation}\label{eq:bivgev}
G_0(y_1,y_2)=\exp\{-L(1/y_1,1/y_2)\}, \quad y_1,y_2>0, 
\end{equation}
where $L : [0, \infty)^2 \to [0, \infty)$, named the stable-tail dependence function \cite[pp. 221--226]{dehaan+f06} is given by
\begin{equation}\label{eq:stabletail}
L(x_1,x_2)=2\int_\spa \max\{x_1 \, w, x_2 \, (1-w)\} H(\diff w),\quad x_1,x_2\geq 0.
\end{equation}
$\spa=[0,1]$ denotes the one-dimensional simplex and $H$, named the angular (or spectral) measure, is the distribution function of
a probability measure supported on
$\spa$ 
and satisfying the following condition,
\begin{enumerate}
	\item[(C1)] The center of the mass of $H$ must be at $1/2$, that is,
\begin{equation*}
\int_{\spa} w\,H(\diff w)=\int_{\spa} (1-w)\,H(\diff w)=1/2.
\end{equation*}
\end{enumerate}
We stress that marginal parameters can always be estimated separately using some
standard methods (e.g. \citeNP[Ch. 3]{dehaan+f06}, \citeNP[Ch. 3,9]{coles01}) and 
hence be used to achieve the representation \eqref{eq:bivgev}.

More precisely, for any max-stable distribution $G_0$ there exists
a finite measure, $H^*$ on $\spa$,
satisfying the mean conditions $\int_{\spa} w\,H^{*}(\diff w)=\int_{\spa} (1-w)\,H^{*}(\diff w)=1$, which implies 
$H^{*}(\spa)=2$, 
such that $G_0$ can be represented by the general form \eqref{eq:bivgev}, where
the angular measure is given by the normalization $H:=H^{*}/H^{*}(\spa)$. We will use $H$ to denote both the probability measure and its distribution function, since the difference can be derived from the context.
Conversely, any probability measure with distribution function $H$ satisfying (C1) generates a valid bivariate max-stable distribution
\cite[Ch. 6]{dehaan+f06}.
As usual practice, for simplicity we focus on a subset of all the possible angular measures (\shortciteANP{beirlant+g+s+t04},  \citeyearNP{beirlant+g+s+t04}, Ch.\ 8).
\begin{assum}\label{ass:measure_cond}
Let $(\{0\},\spaint,\{1\})$ be a partition of $\spa$, where $\spaint=(0,1)$. 
Consider angular measures of the form $H([a,b])=p_0\delta_0([a,b])+ \Delta((a,b]) + p_1\delta_1([a,b])$,
for any $a,b\in\spa$ with $a\leq b$ and $p_0, p_1\in[0,1/2]$. Specifically, $\delta_x(A)$ is the Dirac measure
for any $x\in\real$ and a measurable set $A\subset\real$, $\Delta((a,b])=\Hint(b)-\Hint(a)$ is the Lebesgue-Stieltjes measure,
where  $\Hint(w)=\int_0^w h(t) \, \diff t$ and $h(t)\geq 0$ is a Lebesgue integrable function such
$\int_0^1 h(w) \, \diff w=1-p_0-p_1$.
\end{assum}

The role of the angular measure can be explained by means of its geometric interpretation. The more the dependence between variables increases (the more likely it is that they are similar in value), the more the mass of $H$ tends to accumulate at the center of the simplex, i.e. $1/2$ by condition (C1). Conversely, the more the mass of $H$ moves to the vertices of the simplex, the more the variables become independent.
The distribution function of the angular measure is
\begin{equation}\label{eq:angular_dist}
H([0,w])=p_0+\Hint(w)+p_1\indic_{[0,w]}(1),\quad w\in\spa
\end{equation}
where $\indic_A(x)$ is the indicator function of the set $A$.
This means that $H$ has atoms on the vertices $\{0\}$ and $\{1\}$, denoted by $p_0=H(\{0\})$ and 
$p_1=H(\{1\})=H([0,1])-H([0,1))$ respectively, 
and it is absolutely continuous on $\spaint$.
Notice that, by the mean constraint (C1), the following two identities must be satisfied
\begin{equation}\label{eq:mass_vertices}
p_1=1/2-\int_0^1 w \, h(w) \, \diff w,\quad p_0=1/2-\int_0^1(1-w) \, h(w) \, \diff w.
\end{equation}
We stress that although \eqref{eq:angular_dist} excludes atoms in $\spaint$, it is already rich enough to describe the dependence of many practical applications. 
In the following sections, we will denote by $\spH$ the space of angular distributions defined in this way, so that each $H\in\spH$ is defined by a valid triplet $(p_0,p_1,\Hint)$. 

The properties of the stable-tail dependence function 
are: 
a) it is homogeneous of order $1$, that is $L(vx_1,vx_2)=vL(x_1,x_2)$
for all $v,x_1,x_2>0$; b) $L(x,0)=L(0,x)=x$ for all $x>0$; c) it is continuous and convex, i.e. 
$L(v(x_1,x_2) + (1-v)(x'_1,x'_2))\leq vL(x_1,x_2)+(1-v)L(x'_1,x'_2)$ for all $x_1,x_2,x'_1,x'_2\geq 0$ and $v\in\spa$;
d) $\max(x_1,x_2) \leq L(x_1,x_2) \leq x_1+x_2$ for all $x_1,x_2\geq0$.  
The lower and upper bounds of the last
condition represent the cases of complete dependence and independence, respectively. 
By the homogeneity of $L$ we have that, for all $x_1,x_2\geq 0$,
\begin{equation}\label{eq:pickands}
L(x_1,x_2)=(x_1+x_2)A(t),\quad A(t)=2\int_\spa \max\{t\, (1-w),(1-t)\,w\}H(\diff w),
\end{equation}
where $t=x_2/(x_1+x_2)\in\spa$. The function $A$ is called the Pickands dependence function and, by the properties of $L$, it satisfies the following conditions:
\begin{enumerate}
\item[(C2)] $A(t)$ is convex, i.e., $A(at+(1-a)t')\leq aA(t)+(1-a)A(t')$,
for $a, t,t'\in\spa$;
\item[(C3)]  $A(t)$ has lower and upper bounds
$$
1/2\leq \max\left(t,1-t \right) \leq A(t) \leq 1;\quad t\in\spa.
$$
\end{enumerate}
In condition (C3),  the lower and upper bounds represent  the cases of complete dependence and independence, respectively. In other words, any Pickands dependence function belongs to the class $\spA$ of functions $A: \spa \rightarrow [1/2,1]$ satisfying the above conditions (\shortciteANP{falk+h+r10},  \citeyearNP{falk+h+r10}, Ch.\ 4). Conversely, if a function $A\in\spA$ has second derivatives on $\spaint$, then a valid angular measure $H$ exists, such that
\begin{equation}\label{eq:relation_pick_ang}
A(t)=1+2\int_0^tH(w) \, \diff w-t\quad t\in\spa
\end{equation}
and therefore $A'(t)=-1+2H([0,t])$, where $A'$ is seen as the right-hand derivative and $A''(t)=2h(t)$, for $t\in\spaint$ 
(\shortciteANP{beirlant+g+s+t04},  \citeyearNP{beirlant+g+s+t04}, Ch.\ 8). 
From the above relation follows that the atoms on the vertices of the simplex can be expressed by the Pickands dependence function
as $p_0=\{1+A'(0)\}/2$ and $p_1=\{1-A'(1)\}/2$, where $A'(1)=\sup_{t\in[0,1)}A'(t)$.

The angular distribution is also used to define another important tail dependence function, $R$, given by 
\begin{equation*}\label{eq:taildep}
R(x_1,x_2)=2\int_\spa \min\{x_1\, w,x_2 \, (1-w)\} H(\diff w),\quad x_1,x_2\geq 0,
\end{equation*}
or equivalently, by $R(x_1,x_2)=x_1+x_2-L(x_1,x_2)$.
This function can be used to approximate the
probability of simultaneous exceedances, i.e.
\begin{equation}\label{eq:simu_excess}
\prob(Y_1>y_1, Y_2>y_2)\approx R(1/y_1,1/y_2),
\end{equation}
for high enough thresholds $y_1,y_2>0$ (e.g., \citeNP{boris+p2015}), as well as to compute the coefficient of upper tail
dependence (e.g., \citeNP[p.163]{coles01}), i.e.
\begin{equation}\label{eq:coef_up_tail} 
\chi = \lim_{y\rightarrow + \infty} \prob(Y_1>y | Y_2>y)
= \lim_{y\rightarrow + \infty} \prob(Y_2>y | Y_1>y) \equiv R(1,1)\in[0,1].
\end{equation}
This is an important summary measure of the extremal dependence between two random variables. $Y_1$ and $Y_2$ are independent in the upper tail when $\chi=0$, whereas they are completely dependent when
$\chi=1$.
%

\section{Bayesian nonparametric modeling of $H$ and $A$}\label{sec:bayeisan}

%
\subsection{Bernstein Polynomial Representation}\label{subsec:bern}

The basic idea behind our proposal is to define both the distribution function of the angular measure and the Pickands dependence function as polynomials, restricted to $\spa$, of the form $\sum_{j=0}^k\,a_j b_j(x)$, where each $a_j$ is a real-valued coefficient and the $b_j(\cdot)$, $j=1,2,\ldots$ form an adequate polynomial basis. 
Denote by $\spP_k$ the space of polynomials of degree $k$, and let $\spH$ and $\spA$ be the sets of angular distributions and Pickands 
dependence functions, respectively, as in the previous section.
Since $\bigcup_{k=0}^\infty\spP_k$ is dense in the spaces $\spH$ and $\spA$, we know that any angular distribution function in $\spH$ as well as any Pickands dependence function in $\spA$, can be arbitrarily well approximated
by a polynomial in $\spP_k$ for some $k$. 
Due to their shape preserving properties, it is convenient to use a Bernstein polynomial basis (\citeNP{lorentz53}) that, when restricted to $\spa$, will allow us to construct proper functions on $\spH$ and $\spA$ by identifying valid sets of coefficients.

For each $k=1,2,\ldots$, the Bernstein basis polynomials of degree $k$ are defined as
\begin{eqnarray*}\label{eq:bp}
b_j( x; k) =\frac{k!}{j! (k-j)!}\,x^{j}(1-x)^{k-j}, \quad j=0,\ldots,k.
\end{eqnarray*}
Throughout the article, use will be made of the simple identities,
\begin{equation}\label{eq:basis_beta}
(k+1) \, b_j( x; k)=\betaf(x|j+1,k-j+1),\quad x\in\spa,
\end{equation}
where $\betaf(\cdot|a,b)$ denotes the beta density function with shape parameters $a,b>0$, and for $k\geq1$
\begin{equation}\label{eq:basis_conrners}
b_j(0;k)=\delta_{j,0},\quad b_j(1;k)=\delta_{j,k},
\end{equation}
where $\delta_{j,r}$ is Kronecker delta function (e.g., \citeNP{petrone99}).

We start modeling the extremal dependence by representing the distribution function \eqref{eq:angular_dist} through
a polynomial of degree $k-1$ in Bernstein form, for some $k=1,2,\ldots$. Specifically, we define
\begin{equation}\label{eq:bpoly_angdist}
 H_{k-1}([0,w]) := 
 \left\{
\begin{tabular}{lcl}
$\sum_{j\leq k-1} \eta_j \;  b_j(w; k-1)$ &  if  & $w\in[0,1)$\\
1 & if & $w=1$\\ 
 \end{tabular}
 \right.
\end{equation}
and taking the first derivative of $ H_{k-1}$ with respect to $w$, we have that the density in the interior of $\spa$ is equal to
\begin{equation}\label{eq:bpoly_den}
H'_{k-1}(w)=\sum_{j=0}^{k-2} (\eta_{j+1}-\eta_j) \, \betaf(w|j+1,k-j-1)=:h_{k-1}(w),\quad w\in\spaint
\end{equation}
\begin{prop}\label{prop:angular_dist}
By forcing the coefficients $\eta_0,\ldots,\eta_{k-1}$ in \eqref{eq:bpoly_angdist}, for fixed polynomial degree $k$, to meet the restrictions:
\begin{enumerate}
\item[(R1)] $0\leq p_0=\eta_0\leq \eta_1\leq \ldots \leq \eta_{k-1}=1-p_1\leq 1$; 
\item[(R2)] $\eta_0+\cdots+\eta_{k-1}=k/2$;
\end{enumerate}
it is ensured that $H_{k-1}$ is the distribution function of a valid angular measure satisfying Assumption \ref{ass:measure_cond}.
\end{prop}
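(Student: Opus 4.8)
The plan is to verify directly that $H_{k-1}$ defined in \eqref{eq:bpoly_angdist} has the structural form demanded by Assumption~\ref{ass:measure_cond} and satisfies the mean condition (C1); validity as an angular measure then follows from the converse statement recalled in Section~\ref{sec:extdep}, namely that any probability distribution function obeying (C1) generates a valid bivariate max-stable distribution.

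First I would decompose $H_{k-1}$ into its atomic and absolutely continuous parts. Evaluating \eqref{eq:bpoly_angdist} at the corners via the identities \eqref{eq:basis_conrners} gives $H_{k-1}([0,0])=\eta_0$ and $\lim_{w\uparrow 1}H_{k-1}([0,w])=\eta_{k-1}$, while the definition sets $H_{k-1}([0,1])=1$. Hence $H_{k-1}$ carries an atom of size $p_0:=\eta_0$ at $0$, an atom of size $p_1:=1-\eta_{k-1}$ at $1$, and is absolutely continuous on $\spaint$ with density $h_{k-1}$ given in \eqref{eq:bpoly_den}, matching exactly the form in Assumption~\ref{ass:measure_cond}. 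By (R1) every increment $\eta_{j+1}-\eta_j$ is non-negative, so $h_{k-1}\geq 0$; together with $p_0,p_1\geq 0$ (again from (R1)) this makes $H_{k-1}$ non-decreasing, and $H_{k-1}([0,1])=1$ makes it a genuine probability distribution function. A telescoping sum shows $\int_0^1 h_{k-1}(w)\,\diff w=\eta_{k-1}-\eta_0=1-p_0-p_1$, as required by the assumption. To place $p_0,p_1$ in $[0,1/2]$, I would note that (R1) makes $(\eta_j)$ non-decreasing while (R2) fixes its average at $1/2$; therefore $p_0=\eta_0=\min_j\eta_j\leq 1/2$ and $\eta_{k-1}=\max_j\eta_j\geq 1/2$, whence $p_1=1-\eta_{k-1}\leq 1/2$.

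The crux is the mean condition (C1). Using \eqref{eq:basis_beta} to read $h_{k-1}$ as a mixture of beta densities and recalling that $\betaf(\cdot\,|\,j+1,k-j-1)$ has mean $(j+1)/k$, I would compute
\begin{equation*}
\int_0^1 w\,h_{k-1}(w)\,\diff w=\frac{1}{k}\sum_{j=0}^{k-2}(\eta_{j+1}-\eta_j)(j+1).
\end{equation*}
An Abel summation by parts collapses the right-hand side to $\eta_{k-1}-\tfrac{1}{k}\sum_{j=0}^{k-1}\eta_j$, which equals $\eta_{k-1}-1/2$ by (R2). Adding the contribution $p_1\cdot 1=1-\eta_{k-1}$ of the atom at $1$ (the atom at $0$ contributes nothing) yields
\begin{equation*}
\int_\spa w\,H_{k-1}(\diff w)=(\eta_{k-1}-1/2)+(1-\eta_{k-1})=1/2,
\end{equation*}
and since the total mass is $1$ the complementary identity $\int_\spa (1-w)\,H_{k-1}(\diff w)=1/2$ follows automatically. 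This is precisely (C1), equivalently \eqref{eq:mass_vertices}.

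Having verified the structural form of Assumption~\ref{ass:measure_cond}, the bounds $p_0,p_1\in[0,1/2]$, and condition (C1), I would conclude that $H_{k-1}$ is the distribution function of a valid angular measure. The only delicate step is the mean computation: the summation by parts must correctly absorb the boundary term $\eta_{k-1}$ and combine with the point mass $p_1$ at the vertex $1$, and it is exactly here that restriction (R2) is indispensable, since without it the center of mass would be displaced from $1/2$.
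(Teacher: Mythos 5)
Your proof is correct and follows essentially the same route as the paper's: identify the atoms $p_0=\eta_0$ and $p_1=1-\eta_{k-1}$ via the corner identities \eqref{eq:basis_conrners}, get monotonicity from (R1), and verify the mean condition (C1) by reading $h_{k-1}$ as a mixture of beta densities with means $(j+1)/k$ and invoking (R2). If anything, your version is slightly more complete than the paper's, since you also check explicitly that $p_0,p_1\in[0,1/2]$ and that $\int_0^1 h_{k-1}(w)\,\diff w=1-p_0-p_1$, two requirements of Assumption~\ref{ass:measure_cond} that the paper's proof leaves implicit.
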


Alternatively, we can also model the extremal dependence by representing the Pickands dependence function in \eqref{eq:pickands}
with a polynomial of degree $k=0,1,\ldots$ in the Bernstein form. Specifically, let
\begin{equation}\label{eq:bpoly_picka}
 A_{k}(t) := \sum_{j=0}^k \beta_j b_j(t; k),
 \qquad t \in\spa,
\end{equation}
then by forcing the coefficients $\beta_0,\ldots,\beta_k$ in \eqref{eq:bpoly_picka} to meet the restrictions:
\begin{enumerate}
\item[(R3)] $\beta_0=\beta_k=1\ge \beta_j,$ for all $j=1,\ldots,k-1$;
\item[(R4)] $\beta_1=\frac{k-1 + 2 p_0}{k}$ and $\beta_{k-1} = \frac{k-1 + 2p_1}{k}$;
\item[(R5)] $\beta_{j+2}-2\beta_{j+1}+\beta_j\geq 0$, $j=0,\ldots,k-2$;
\end{enumerate}
it is ensured that $A_{k}$ satisfies conditions (C2)-(C3) and hence it is a proper Pickands dependence function (\shortciteANP{marcon+p+n+m15},  \citeyearNP{marcon+p+n+m15}). This is easily explained by the following. First, by \eqref{eq:basis_conrners} we have that $A_k(0)=A_k(1)=1$
if $\beta_0=\beta_k=1$ and it is immediate to check that $A_k(t)=1$ for all $t\in\spa$ when $\beta_0=\cdots=\beta_k=1$. Because $b_j(t; k)\leq 1$
for all $t\in\spa$ then $A_k(t)\leq 1$ by (R3).  Second, $A_k(t)\geq\max(t,1-t)$ for all $t\in\spa$ if $A_k'(0)\geq -1$ and $A_k'(1)\leq 1$, where
\begin{equation}\label{eq:bpoly_picka_fd}
 A'_{k}(t) = \sum_{j=0}^{k-1} (\beta_{j+1}-\beta_j) \, \betaf(t|j+1,k-j), \quad t \in\spa.
\end{equation}
Since $A_k'(0)=k(\beta_1-\beta_0)$, $A_k'(1)=k(\beta_k-\beta_{k-1})$ and, on the other hand, knowing also from \eqref{eq:relation_pick_ang} that
$A_k'(0)=2p_0-1$ and $A_k'(1)=1-2p_1$, we obtain the conditions in (R4), which imply that
$\beta_1\geq 1-1/k$ and $\beta_{k-1}\geq 1-1/k$. Finally, $A_k(t)$ is convex if $A''_k(t)\geq 0$ for all $t\in\spa$, where
\begin{equation}\label{eq:bpoly_picka_sd}
 A_{k}^{''}(t) = k \,  \sum_{j=0}^{k-2} (\beta_{j+2}-2\beta_{j+1}+\beta_j) \, \betaf(t|j+1,k-j-1), \quad t \in\spa.
\end{equation}
Clearly the positivity of \eqref{eq:bpoly_picka_sd} is guaranteed by the conditions in (R5).

Under Assumption \ref{ass:measure_cond}, the distribution function \eqref{eq:bpoly_angdist} and the Pickands dependence function
\eqref{eq:bpoly_picka} are linked, as described by the next result.
\begin{prop}\label{prop:equivalence_AH}
Let $H_{k-1}$ be the distribution function of an angular measure with expression \eqref{eq:bpoly_angdist},  and
$A_k $ be the Pickands dependence function given in \eqref{eq:bpoly_picka}. 
Then, the following are equivalent:
\begin{itemize}
\item[i)]  Given $A_{k}$ one may recover $H_{k-1}$ by means of their coefficients' relationship:
\begin{equation}\label{eq:etas}
\eta_j=\frac{k}{2}\left(\beta_{j+1}-\beta_j+\frac{1}{k}\right),\quad j=0,\ldots,k-1.
\end{equation}
Conversely, given $H_{k-1}$, one may recover $A_{k}$ by means of their coefficients' relationship:
\begin{equation}\label{eq:betas}
\beta_{j+1}=\frac{1}{k}\left(2\sum_{i=0}^{j}\eta_i + k-j-1\right),\quad j=0,\ldots,k-1,
\end{equation}
with $\beta_0=1$.
\item[ii)] Restrictions (R1) and (R2) are satisfied and $H_{k-1}$ meets condition (C1), if and only if 
restrictions (R3)-(R5) are verified and $A_k$ meets conditions (C2) and (C3). 
\end{itemize}
\end{prop}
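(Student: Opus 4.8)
The plan is to treat both halves of the proposition as two faces of the single identity \eqref{eq:relation_pick_ang}, namely $A'(t)=-1+2H([0,t])$ on $\spaint$. Substituting the Bernstein representations, I would first rewrite both sides in a common degree-$(k-1)$ basis. For the left-hand side, the identity \eqref{eq:basis_beta} at degree $k-1$ gives $\betaf(t\mid j+1,k-j)=k\,b_j(t;k-1)$, so \eqref{eq:bpoly_picka_fd} becomes $A_k'(t)=k\sum_{j=0}^{k-1}(\beta_{j+1}-\beta_j)\,b_j(t;k-1)$. For the right-hand side, the partition of unity $\sum_{j=0}^{k-1}b_j(t;k-1)=1$ lets me absorb the constant, yielding $-1+2H_{k-1}([0,t])=\sum_{j=0}^{k-1}(2\eta_j-1)\,b_j(t;k-1)$. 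Since the degree-$(k-1)$ Bernstein polynomials are linearly independent, matching coefficients forces the single family of equations $k(\beta_{j+1}-\beta_j)=2\eta_j-1$ for $j=0,\dots,k-1$.

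This family yields part i) at once: solving for $\eta_j$ gives \eqref{eq:etas}, while summing $\beta_{j+1}-\beta_j=(2\eta_j-1)/k$ telescopically from $0$ to $j$ and using the normalisation $\beta_0=1$ gives \eqref{eq:betas}. The two maps are therefore mutual inverses by construction, which establishes the coefficients' relationship in both directions.

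For part ii) I would read each restriction off the bijection $k(\beta_{j+1}-\beta_j)=2\eta_j-1$. Subtracting consecutive relations gives $k(\beta_{j+2}-2\beta_{j+1}+\beta_j)=2(\eta_{j+1}-\eta_j)$, so the monotonicity chain $\eta_j\le\eta_{j+1}$ in (R1) is equivalent to $\beta_{j+2}-2\beta_{j+1}+\beta_j\ge0$, i.e. exactly (R5), which by \eqref{eq:bpoly_picka_sd} is convexity (C2). Evaluating \eqref{eq:betas} at $j=0$ and $j=k-2$, together with $\eta_0=p_0$ and $\eta_{k-1}=1-p_1$, reproduces the endpoint values $\beta_1=(k-1+2p_0)/k$ and $\beta_{k-1}=(k-1+2p_1)/k$ of (R4); the endpoint bounds $\eta_0\ge0$, $\eta_{k-1}\le1$ then translate into $\beta_1,\beta_{k-1}\ge1-1/k$, which are precisely the derivative inequalities $A_k'(0)\ge-1$ and $A_k'(1)\le1$ securing the lower bound in (C3). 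Evaluating \eqref{eq:betas} at $j=k-1$ shows $\beta_k=\tfrac{2}{k}\sum_{i=0}^{k-1}\eta_i$, so the sum condition (R2) is equivalent to $\beta_k=1$, the remaining half of (R3); and since $A_k(t)=\sum_j\beta_j\,b_j(t;k)$ with $b_j\ge0$ summing to one, the interior inequalities $\beta_j\le1$ give $A_k\le1$, the upper bound in (C3).

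The one step requiring genuine care, rather than bookkeeping, is tying the sum condition (R2) to the mean condition (C1). I would argue that $A_k(1)=\beta_k$ by \eqref{eq:basis_conrners}, while integrating \eqref{eq:relation_pick_ang} gives $A(1)=2\int_0^1 H([0,w])\,\diff w=2\,(1-\expect_H[W])$ by a Fubini computation; hence $\beta_k=1$ is equivalent to $\expect_H[W]=1/2$, which is (C1), and by the previous paragraph also to (R2). It then remains to verify that the atom bounds $p_0,p_1\in[0,1/2]$ of Assumption \ref{ass:measure_cond} are not extra hypotheses but consequences: $p_0=\eta_0$ is the smallest of the $\eta_j$ and $p_1=1-\eta_{k-1}$ with $\eta_{k-1}$ the largest, so (R1)--(R2) force $\eta_0\le1/2\le\eta_{k-1}$ and the bounds follow. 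Assembling all these equivalences in both directions delivers the claimed equivalence of i) and ii).
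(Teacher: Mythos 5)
Your part i) is exactly the paper's argument: both start from the identity $A'(t)=2H([0,t])-1$ in \eqref{eq:relation_pick_ang}, rewrite both sides in the degree-$(k-1)$ Bernstein basis (using the partition of unity to absorb the constant), and match coefficients to obtain $k(\beta_{j+1}-\beta_j)=2\eta_j-1$, from which \eqref{eq:etas} and, by telescoping with $\beta_0=1$, \eqref{eq:betas} follow. Your part ii) is also built on the same computations as the paper (second differences give (R1)$\Leftrightarrow$(R5), the endpoint evaluations of \eqref{eq:betas} give (R4), and $\beta_k=\tfrac{2}{k}\sum_i\eta_i$ ties (R2) to $\beta_k=1$), organized more symmetrically as a list of equivalences; your Fubini identity $A_k(1)=2\,(1-\expect_H[W])$ linking (R2), $\beta_k=1$ and (C1) directly, and your remark that (R1)--(R2) force $p_0,p_1\in[0,1/2]$, are nice additions the paper leaves implicit.

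There is, however, one genuine gap, in the direction (R1)--(R2) $\Rightarrow$ (R3): you never derive the interior inequalities $\beta_j\le 1$, $j=1,\ldots,k-1$; you only invoke them (``the interior inequalities $\beta_j\le1$ give $A_k\le1$''). These inequalities are a substantive part of (R3) and do not follow from $\beta_0=\beta_k=1$ alone, so as written your chain of equivalences does not establish (R3) from a valid $H_{k-1}$. The paper proves this step by noting that, via \eqref{eq:betas}, $\beta_{j+1}\le1$ is equivalent to the partial-sum bound $\sum_{i\le j}\eta_i\le (j+1)/2$, and deriving a contradiction with (R2) (using the monotonicity in (R1)) when this bound fails. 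Your own framework actually closes the gap more cleanly than that: you have already shown that (R1)-monotonicity is equivalent to nonnegative second differences of the $\beta$-sequence, and a discretely convex sequence attains its maximum at an endpoint; since $\beta_0=\beta_k=1$, this forces $\beta_j\le1$ for all $j$. But that one-line argument (or the paper's contradiction argument) must be stated explicitly for the proof to be complete.
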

This result tells us that the one-to-one relationship between the angular measure and the Pickands dependence function, when these are represented with Bernstein polynomials, is simply expressed through a one-to-one relationship between their corresponding coefficients.
Its implications are as follows. One can estimate the coefficients in \eqref{eq:bpoly_angdist} so that they meet the conditions (R1) and (R2) and then compute the coefficients in \eqref{eq:bpoly_picka} by equation \eqref{eq:betas}, which will automatically meet the conditions (R3)-(R5). 
Or vice versa, estimate the coefficients in \eqref{eq:bpoly_picka} satisfying conditions (R3)-(R5) and derive the coefficients in \eqref{eq:bpoly_angdist} 
by \eqref{eq:etas}, which will satisfy (R1) and (R2).
As a consequence, for inference, there is no need to choose between one or the other way of representing the dependence structure.

Finally, $H_{k-1}$ and $A_k$ can provide accurate approximations of the true functions $H$ and $A$.
\begin{prop}\label{prop:approximation_H_A}
Let 
\begin{align*}
\spH_{k-1}=\{ & w\mapsto H_{k-1}(w)= \sum_{j\leq k-1}\eta_j \, b_j(w;k-1):  \\ 
& \eta_0,\dots,\eta_{k-1}\in[0,1]  \text{ and (R1)-(R2) are satisfied}\}
\end{align*}
and
$$
\spA_k=\{t\mapsto A_k(t)=\sum_{j\leq k}\beta_jb_j(t;k): \beta_0,\dots,\beta_{k}\in[0,1] \text{and (R3)-(R5) are satisfied}\}.
$$
Then, $\spA_k$ and $\spH_{k-1}$, $k=1,2,\ldots$ are nested sequences in $\spA$ and $\spH$, respectively. Additionally, there are polynomials
$A_k$ and $H_{k-1}$ such that 
\begin{equation}\label{eq:convergence_A}
\lim_{k\rightarrow\infty}\sup_{t\in\spa} |A_k(t)-A(t)| =0
\end{equation}
and
\begin{equation}\label{eq:convergence_H}
\lim_{k\rightarrow\infty}\sup_{w\in\spa} |H_{k-1}(w)-H(w)| =0.
\end{equation}
\end{prop}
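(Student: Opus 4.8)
The plan is to handle the three assertions in turn: the nesting by Bernstein degree elevation, the approximation of $A$ by the classical Bernstein operator, and the approximation of $H$ by transporting the latter through the coefficient bijection of Proposition~\ref{prop:equivalence_AH}. The containments $\spA_k\subseteq\spA$ and $\spH_{k-1}\subseteq\spH$ are already available (from the verification of (C2)--(C3) in the text preceding Proposition~\ref{prop:equivalence_AH}, and from Proposition~\ref{prop:angular_dist}), so for the first assertion only $\spA_k\subseteq\spA_{k+1}$ and $\spH_{k-1}\subseteq\spH_k$ remain to be shown.

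For the nesting I would use the degree-elevation identity: every $A_k=\sum_{j\le k}\beta_j b_j(\cdot;k)$ equals $\sum_{j\le k+1}\widetilde\beta_j b_j(\cdot;k+1)$ with $\widetilde\beta_j=\tfrac{j}{k+1}\beta_{j-1}+\tfrac{k+1-j}{k+1}\beta_j$. Since the $\widetilde\beta_j$ are convex combinations of the $\beta_i$, (R3) is immediate ($\widetilde\beta_0=\widetilde\beta_{k+1}=1$ and $\widetilde\beta_j\le1$); a one-line computation gives $\widetilde\beta_1=\tfrac{k+2p_0}{k+1}$ and $\widetilde\beta_k=\tfrac{k+2p_1}{k+1}$, which is exactly (R4) at degree $k+1$ with the same atoms; and the second differences satisfy $\widetilde\beta_{j+2}-2\widetilde\beta_{j+1}+\widetilde\beta_j=\tfrac{j}{k+1}(\beta_{j+1}-2\beta_j+\beta_{j-1})+\tfrac{k-1-j}{k+1}(\beta_{j+2}-2\beta_{j+1}+\beta_j)\ge0$, so (R5) is preserved (the boundary terms drop out because their coefficients vanish). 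Hence $A_k\in\spA_{k+1}$. Because degree elevation leaves the underlying function unchanged and the $A$--$H$ correspondence of Proposition~\ref{prop:equivalence_AH} is a relation between functions, the matching $H_{k-1}\in\spH_{k-1}$ is unchanged as a function and therefore also lies in $\spH_k$; this yields $\spH_{k-1}\subseteq\spH_k$ without a separate computation.

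For \eqref{eq:convergence_A} I would take the classical Bernstein polynomial $A_k(t)=\sum_{j=0}^k A(j/k)\,b_j(t;k)$, so $\beta_j=A(j/k)\in[1/2,1]$ by (C3). Membership in $\spA_k$ is checked directly: (R3) holds since $A(0)=A(1)=1$ and $A\le1$; (R5) holds because the second difference $A(\tfrac{j+2}{k})-2A(\tfrac{j+1}{k})+A(\tfrac{j}{k})$ of the convex function $A$ is nonnegative; and (R4) holds because convexity together with $A'(0)\ge-1$ and $A'(1)\le1$ (equivalently $p_0,p_1\ge0$) gives the supporting-line bounds $A(1/k)\ge1-1/k$ and $A(1-1/k)\ge1-1/k$, i.e. $\beta_1,\beta_{k-1}\in[1-1/k,1]$. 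Since every $A\in\spA$ is continuous on $\spa$ (it is convex and squeezed between $\max(t,1-t)$ and $1$), Bernstein's approximation theorem gives $\sup_{t\in\spa}|A_k(t)-A(t)|\to0$.

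Finally, for \eqref{eq:convergence_H} I would let $H_{k-1}$ be the angular distribution whose coefficients are produced from the above $A_k$ by \eqref{eq:etas}; by the reverse direction of Proposition~\ref{prop:equivalence_AH}, (R1)--(R2) hold automatically, so $H_{k-1}\in\spH_{k-1}$, the crucial point being that the barycentre condition (C1) is forced by the endpoint values $A(0)=A(1)=1$ and need not be imposed by hand. Using \eqref{eq:basis_beta} and \eqref{eq:etas} one checks that $H_{k-1}([0,w])=\tfrac12\{1+A_k'(w)\}$ on $[0,1)$, while \eqref{eq:relation_pick_ang} gives $H([0,w])=\tfrac12\{1+A'(w)\}$, and at $w=1$ both equal $1$. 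Consequently $\sup_{w}|H_{k-1}(w)-H(w)|=\tfrac12\sup_{w}|A_k'(w)-A'(w)|$, and the task reduces to the uniform convergence of the derivatives. This is the main obstacle: uniform convergence of $A_k$ to $A$ does not by itself control $A_k'$. It is resolved by Assumption~\ref{ass:measure_cond}: the absence of atoms in $\spaint$ makes $w\mapsto H([0,w])$, and hence $A'=2H([0,\cdot])-1$, continuous on $\spa$, so $A\in C^1(\spa)$; the classical strengthening of Bernstein's theorem to $C^1$ functions then yields $\sup_{t}|A_k'(t)-A'(t)|\to0$, completing \eqref{eq:convergence_H}.
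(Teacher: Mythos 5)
Your proof is correct, and while it follows the same three-part skeleton as the paper (nesting via Bernstein degree elevation, approximation of $A$ by the classical Bernstein operator $\sum_j A(j/k)b_j(\cdot;k)$, and transport to $H$ through the derivative relation $H([0,\cdot])=\{1+A'\}/2$), it differs from the paper's proof in two substantive ways. First, the paper outsources the $\spA$-nesting and the membership/convergence of the Bernstein operator to Propositions 3.1 and 3.3 of \shortciteN{marcon+p+n+m15}, verifying by hand only the (R4)-type endpoint conditions, and it proves the $\spH$-nesting by a separate direct computation on the elevated coefficients $\eta^*_j=\eta_j(k-j)/k+\eta_{j-1}j/k$ (monotonicity and the sum constraint); you instead verify (R3)--(R5) under degree elevation self-containedly via the second-difference identity, and then obtain the $\spH$-nesting for free by pushing the elevated $A_k$ through Proposition~\ref{prop:equivalence_AH} --- a slicker and more economical route, legitimate because Bernstein coefficients of a fixed degree are unique. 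Second, and more importantly, for the crux of \eqref{eq:convergence_H} --- upgrading uniform convergence of $A_k$ to uniform convergence of $A_k'$ --- the paper invokes Theorem 25.7 of \citeN{rockafellar2015} (pointwise convergence of convex functions to a differentiable convex limit implies locally uniform convergence of derivatives), whereas you invoke the $C^1$ strengthening of Bernstein's theorem from \citeN{lorentz53}. Both hinge on the differentiability of $A$ guaranteed by Assumption~\ref{ass:measure_cond}, which you make commendably explicit while the paper leaves it implicit; your route has the advantage of being tied to the specific Bernstein construction (and gives rates if desired), while the paper's convex-analysis argument would apply to any convex approximants, not just Bernstein ones. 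One imprecision worth fixing: $w\mapsto H([0,w])$ is \emph{not} continuous at $w=1$ when $p_1>0$ (it jumps by $p_1$), so your appeal to continuity should be phrased for $A'$ with the one-sided convention $A'(1)=\sup_{t<1}A'(t)=1-2p_1$, and your identity $\sup_w|H_{k-1}(w)-H(w)|=\tfrac12\sup_w|A_k'(w)-A'(w)|$ holds on $[0,1)$ only; since $H_{k-1}(1)=H(1)=1$ exactly, the needed inequality survives and the argument is unaffected.
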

\subsection{Bayesian Inference}\label{subsec:prior}

We provide details of the key ingredients of a Bayesian nonparametric model for the extremal dependence. This can be formulated through \eqref{eq:bpoly_angdist} or \eqref{eq:bpoly_picka}, indifferently since, as seen in Section \ref{subsec:bern}, one expression can always be recovered from the other. We show the explicit forms in which the prior distribution and the likelihood function for one approach are linked to those of the other.

We start by constructing a prior probability on the space $\spH$ of valid angular measures using the Bernstein polynomial representation \eqref{eq:bpoly_angdist}, for some polynomial order $k$.
Then, the prior on $\spH$ is induced by a joint prior distribution on $(k,\bdeta_k)$, where $\bdeta_k=(\eta_0,\ldots,\eta_{k-1})$. Because $\eta_0=p_0$ and $\eta_{k-1}=1-p_1$
by (R2), we conveniently express the prior distribution as
\begin{equation}\label{eq:joint_prior}
\Pi(k, \bdeta_k)=\Pi(\bdeta_k|k)\,\Pi(k).
\end{equation}
Note that for $k<3$, the resulting dependence structure is trivial, so we will only consider the case when $k\geq 3$.
Some convenient choices for the prior distribution of the polynomial order are   $\Pi(k)=\text{Pois}(k-3|\kappa_P)$ or $\Pi(k)=\text{nbin}(k-3|\kappa_{NB},\sigma^2)$, where $\kappa_P,\kappa_{NB}>0$ are the means of Poisson and negative binomial distributions, respectively. The latter, however, is more flexible through its variance $\sigma^2$. Specifically, the probability mass function for the negative binomial distribution is $\Gamma(x+s)/(\Gamma(s) x!) \; p^{s} \; (1-p)^x$, for $ x = 0, 1, 2, \ldots$, with target for number of successful trials $s > 0$ and probability of success in each trial $0 < p \leq 1$. With this parametrization, the mean corresponds to $\kappa_{NB} = s(1-p)/p$ and variance $\sigma^2 = s(1-p)/p^2$.
In order to define a valid prior on $\spH$, $\Pi(\bdeta_k|k)$ must assign, for each $k\in\nat$, probability one to the set $\spE=\spE(k)\subset \spa^{k}$ of  $k$-dimensional vectors satisfying (R1) and (R2). 
By (R1) we have that the atoms on the edges are represented by parameters $\eta_0=p_0$ and $\eta_{k-1}=1-p_1$. Given the particular role that these quantities play in the model, and the relevance of their interpretation, we have decided to treat them separately when defining the prior. Furthermore, this choice seems empirically justified by the results obtained through simulation studies.
Therefore, we define the conditional prior for the polynomial coefficients given the degree $k$ in the following manner, 
$$
\Pi(\bdeta_k|k)=\Pi(\eta_{1},\ldots,\eta_{k-2}|p_1,p_0,k)\, \Pi(p_1|k,p_0)\,\Pi(p_0).
$$ 
Specifically, we let
$\Pi(p_0)=\text{Unif}(0,1/2).$ Then,
\begin{equation}\label{eq:cond_p0}
	(k-1) \, p_0+(1-p_1)\le\sum_{j=0}^{k-1}\eta_j= k/2\le p_0 + (k-1)(1-p_1),
\end{equation} 
where the identity follows from condition (R2), while the two inequalities stem from (R1). After simple manipulations, it follows that, in order for (R1) and (R2) to hold, a necessary condition is $(k-1) \, p_0-k/2+1\le p_1\le (p_0+k/2-1)/(k-1)$, so we set $\Pi(p_1|k, p_0)=\text{Unif}(a,b)$, with interval limits given by $a=a(k,p_0)=\max\{0, (k-1)p_0-k/2+1\}$ and $b=b(k,p_0)=(p_0+k/2-1)/(k-1)$. 

Now, conditional on $k$, $\eta_0$ and $\eta_{k-1}$, we set $X_0=\eta_0$, $X_{k-1}=\eta_{k-1}$ and we extend the prior distribution to the remaining parameters $\eta_1,\ldots,\eta_{k-2}$ by focusing on  the differences $X_j=\eta_{j}-\eta_{j-1}$, $j=1,\ldots,k-1$, in order to guarantee that condition (R1) is satisfied. For simplicity, analogous to what we did with $p_1$, we make such differences conditionally uniformly distributed on appropriate intervals, specified below, in order to satisfy also condition (R2), that is
%
%
\begin{equation}\label{eq:mconst_diff}
\sum_{j=0}^{k-1}\eta_j=\sum_{j=0}^{k-1}(k-j)X_j = k/2.
\end{equation}
Notice that we can rewrite \eqref{eq:mconst_diff} as
$$
(k-j)X_j + \sum_{l=j+1}^{k-1}(k-l)X_l = k/2 -\sum_{l=0}^{j-1}(k-l)X_l,
$$
for $j=1,\ldots,k-2$. Thus, if we assume that $X_l=0$ for $l=j+1,\ldots,k-2$, so that $\eta_l=\eta_j$ for
$l=j+1,\ldots,k-2$, we attain the upper bound,
$$
X_j \leq \frac{1}{k-j-1}\left(k/2 +p_1 -1 -\sum_{l=0}^{j-1}(k-l-1)X_l\right),
$$
for $j=1,\ldots,k-2$. On the other hand, if we assume that 
$$
\sum_{l=j+1}^{k-2}(k-l-1)X_l=(k-j-2)\bigg(1-p_1-\sum_{l=0}^j X_l\bigg),
$$
corresponding to $\eta_l=1-p_1$ for $l=j+1,\ldots,k-2$, then we attain, through few algebraic manipulations, the lower bound
$$
X_j \geq \max\left\{0, k/2 +(j-k+1)(1-p_1) - \sum_{l=0}^{j-1} (j-l+1)X_l\right\},
$$
for $j=1,\ldots,k-2$.
%
Rewriting these inequalities in terms of $\eta_j$, we find that the widest valid range for the coefficients can be expressed in terms of intervals $\spE_j=\spE_j(k,\eta_0,\ldots,\eta_{j-1},\eta_{k-1})$, given by
\begin{align*}
 \spE_j=\Bigg[
 \max \Bigg\{ \eta_{j-1}, \; & \frac{k}{2} + (k-j-1)(p_1-1) - \sum_{l=0}^{j-1}\eta_l \Bigg\}; \; \\
 & \min \Bigg\{ 1-p_1; \frac{1}{k-j-1}\Big(\frac{k}{2} +p_1 -1 -\sum_{l=0}^{j-1}\eta_l\Big)\Bigg\}
\Bigg],
\end{align*}
for $j=1,\ldots,k-2$.
Finally, we let $\eta_j|(k,\eta_0,\ldots,\eta_{j-1},\eta_{k-1})$ for $j=1,\ldots,k-2$ be
conditionally independent and uniformly distributed on such intervals, therefore arriving at the following conditional prior distribution
\begin{align}\label{eq:prior_eta}
\Pi(\eta_{1},\ldots,\eta_{k-2}|k,p_1,p_0)=\prod_{j=1}^{k-2}\Pi(\eta_j |k, \eta_0,\ldots,\eta_{j-1},\eta_{k-1})=\prod_{j=1}^{k-2}\text{Unif}(\spE_j).
\end{align}
A direct consequence of Proposition \ref{prop:equivalence_AH} is that a valid prior distribution is induced also
on the space $\spA$ of valid Pickands dependence functions, as expressed by the following result. 
\begin{cor}\label{cor:prior_on_beta}
Let $\setB=\setB(k)\subset \spa^{k+1}$ be the space of $(k+1)$-dimensional vectors 
satisfying restrictions (R3)-(R5). Then, for any fixed $k\ge 3$ the prior distribution 
\eqref{eq:prior_eta} induces a prior distribution on the coefficients of $A_k$ in \eqref{eq:bpoly_picka}. 
Precisely $\beta_j|(\beta_0,\ldots,\beta_{j-1}, \beta_{k-1})$, for $j=2,\ldots,k-2$, 
turns out to be conditionally independent and uniformly distributed
on the intervals
\begin{align*}
\setB_j=\Bigg[
\max \Bigg\{ 2\,\beta_{j-1}-\beta_{j-2}, \; & (k-j) \beta_{k-1} - (k-j-1)\Bigg\}; \; \\
& \frac{1}{k-j} \bigg( \beta_{k-1} + (k-j-1) \beta_{j-1} \bigg) \Bigg].
\end{align*}
The prior distribution on $\bbeta_k = (\beta_0, \ldots, \beta_k)$ is then given by 
\begin{eqnarray*}
\Pi(\bbeta_k|p_1, p_0, k)&=&\indic_{\{1\}}(\beta_0) \; \indic_{\{(k-1+2 \, p_0)/k\}}(\beta_1) \prod_{j=2}^{k-2}\Pi(\beta_j|\beta_0,\ldots,\beta_{j-1},\beta_{k-1}) \\
&&\\
&\times& \indic_{\{(k-1+2 \, p_1)/k\}}(\beta_{k-1}) \; \indic_{\{1\}}(\beta_k)\\
&& \\
&=&\indic_{\{1\}}(\beta_0) \; \indic_{\{(k-1+2 \, p_0)/k\}}(\beta_1) \; \indic_{\{(k-1+2 \, p_1)/k\}}(\beta_{k-1}) \; \indic_{\{1\}}(\beta_k) \; \\
&&\\
&\times& \prod_{j=2}^{k-2} \text{Unif} (\setB_j) \; \left(\frac{k}{2}\right)^{k-3}.
\end{eqnarray*}
\end{cor}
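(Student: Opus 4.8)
The plan is to obtain $\Pi(\bbeta_k\mid p_1,p_0,k)$ as the push-forward of the prior \eqref{eq:prior_eta} under the coefficient map of Proposition~\ref{prop:equivalence_AH}. Since \eqref{eq:etas}--\eqref{eq:betas} exhibit $\bdeta_k\leftrightarrow\bbeta_k$ as an affine bijection, and since part (ii) of that proposition already guarantees that the valid set $\spE(k)$ is carried exactly onto $\setB(k)$, the whole argument reduces to a change-of-variables computation. I would write $\Pi(\bbeta_k\mid p_1,p_0,k)=\Pi(\bdeta_k\mid k)\,\lvert\partial\bdeta_k/\partial\bbeta_k\rvert$ on the interior coordinates, holding $p_0=\eta_0$ (equivalently $\beta_1$) and $p_1$ (equivalently $\beta_{k-1}$) fixed, and then verify the three assertions in turn: that the conditioning indicators pin $\beta_0,\beta_1,\beta_{k-1},\beta_k$ to their deterministic values; that the Jacobian equals $(k/2)^{k-3}$; and that each conditional law is uniform on the stated $\setB_j$.

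For the Jacobian I would use the telescoping form of \eqref{eq:etas}, namely $\eta_j=\tfrac k2(\beta_{j+1}-\beta_j)+\tfrac12$, so that each free interior coordinate $\eta_j$ depends only on the two consecutive $\beta_j,\beta_{j+1}$. Ordering the free coordinates $\eta_1,\dots,\eta_{k-3}$ against $\beta_2,\dots,\beta_{k-2}$, the matrix $\partial\eta_i/\partial\beta_m$ is lower bidiagonal with constant diagonal $k/2$, so its determinant is immediately $(k/2)^{k-3}$; the two boundary directions contribute the factors $\partial\beta_1/\partial p_0=\partial\beta_{k-1}/\partial p_1=2/k$ that are conditioned out. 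Because an affine map carries a uniform law on an interval to a uniform law on the image interval, the conditional independence and the uniform form of the $\beta_j$'s then follow once the intervals are matched, using the length relation $\lvert\spE_j\rvert=\tfrac k2\lvert\setB_{j+1}\rvert$ to reconcile the normalisations with the displayed $(k/2)^{k-3}$.

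Matching the intervals is where the real work lies. Substituting $\sum_{l=0}^{j-1}\eta_l=\tfrac12\{k(\beta_j-1)+j\}$, $1-p_1=\tfrac12\{k(1-\beta_{k-1})+1\}$, and $\eta_{j-1}=\tfrac k2(\beta_j-\beta_{j-1})+\tfrac12$ into the definition of $\spE_j$ and passing through $\beta_{j+1}=\tfrac2k\eta_j-\tfrac1k+\beta_j$ (which is increasing, so endpoints map in order), I expect the endpoint $\eta_{j-1}$ to become $2\beta_j-\beta_{j-1}$, the feasibility lower endpoint to become $(k-j-1)\beta_{k-1}-(k-j-2)$, and the feasibility upper endpoint to become $(k-j-1)^{-1}\{\beta_{k-1}+(k-j-2)\beta_j\}$, which are exactly the bounds defining $\setB_{j+1}$. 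The delicate point is the \emph{second} upper endpoint of $\spE_j$, coming from the monotonicity bound $\eta_j\le 1-p_1$: it transforms into $\beta_{j+1}\le\beta_j+1-\beta_{k-1}$, which does not appear in $\setB_{j+1}$, so the main obstacle is to show this constraint is redundant. I would do this by an inductive comparison: a direct calculation gives $(\beta_j+1-\beta_{k-1})-(k-j-1)^{-1}\{\beta_{k-1}+(k-j-2)\beta_j\}=(k-j-1)^{-1}\{\beta_j+(k-j-1)-(k-j)\beta_{k-1}\}$, which is nonnegative precisely when $\beta_j\ge(k-j)\beta_{k-1}-(k-j-1)$. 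But this is exactly the feasibility lower endpoint already enforced when $\beta_j$ was drawn from $\setB_j$ at the previous step, so the monotonicity upper bound is always slack and may be dropped. The base of the induction (index $j=2$, where $\beta_1=(k-1+2p_0)/k$ is fixed rather than drawn) must be checked by hand; one finds it reduces to $p_1\le(p_0+k/2-1)/(k-1)$, which is precisely the upper limit of the interval on which $p_1$ is supported, so it holds automatically. Finally I would note that the interval $\spE_{k-2}$ collapses to the single point enforcing (R2): this is what makes the interior map genuinely $(k-3)$-dimensional, consistent with the exponent in $(k/2)^{k-3}$, and confirms that no spurious factor is introduced at the degenerate coordinate.
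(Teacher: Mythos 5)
Correct, and essentially the paper's own approach: the paper disposes of this corollary in two sentences, merely invoking the change-of-variables formula for the affine coefficient bijection of Proposition~\ref{prop:equivalence_AH} and stating that the Jacobian is $(k/2)^{k-3}$. Your argument is that same push-forward computation carried out in full --- the bidiagonal Jacobian, the endpoint-by-endpoint matching of $\spE_j$ onto $\setB_{j+1}$, the inductive elimination of the redundant monotonicity bound $\eta_j\le 1-p_1$ (whose base case is exactly the upper limit $b(k,p_0)$ of the prior support of $p_1$), and the collapse of $\spE_{k-2}$ onto the single point enforcing (R2) --- and all of these details check out, supplying considerably more than the paper records.
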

This result follows directly from the change of variable formula. In fact, letting $\beta(\eta ; \, k)$ given by expression \eqref{eq:betas} denote the inverse transformation of $\eta(\beta; \, k)$, given by expression \eqref{eq:etas}, the corresponding Jacobian is $(k/2)^{k-3}$. Notice that, in this representation, the point masses of $H$ are given by $p_0=1/2-k(1-\beta_1)/2$ and $p_1=1/2-k(1-\beta_{k-1})/2$.

The prior thus constructed assigns positive probability to any subset of $(\{k\}\times \spa^{k})$, $k>1$ which is valid, in the sense of satisfying conditions (R1)-(R2) or, equivalently, to every subset of $(\{k\}\times \spa^{k+1})$, $k>1$ which is valid, in the sense of satisfying conditions (R3)-(R5). It therefore follows from proposition \ref{prop:approximation_H_A} that the prior has a full support, in terms of the $L_\infty$ norm, on the spaces $\spH$ and $\spA$.

We now derive the analytical expression of the likelihood function. To do so, we consider for simplicity the distribution \eqref{eq:bivgev} with stable tail 
dependence function represented by \eqref{eq:pickands}.
Then, the joint probability density function (p.d.f.) is given by
$$
g(y_1,y_2)=|J(y_1,y_2)| \; \frac{\partial^2}{\partial x_1 \partial x_2} G(1/x_1,1/x_2)\Big\vert_{x_1=1/y_1, x_2=1/y_2},
$$
for all $y_1,y_2>0$, where $J(y_1,y_2)=(y_1y_2)^{-2}$. This 
is equal to
\begin{equation*}\label{eq:joint_den}
g(y_1,y_2)=G(y_{1}, y_{2})\left[ \frac{\left\{ A(t) - t\, A'(t) \right\} \left\{ A(t) + (1-t) \, A'(t) \right\}}
{(y_1y_2)^2} + \frac{A''(t)}{(y_1+y_2)^3} \right].
\end{equation*}
Let $\bY_{1:n}=(\bY_1,\ldots,\bY_n)$ be i.i.d. copies of a bivariate max-stable random vector with p.d.f. $g(y_1,y_2)$. 
Assume that the Pickands dependence function is represented by \eqref{eq:bpoly_picka}, for some fixed $k$. 
Then, the log-likelihood function is equal to
\begin{align}\label{eq:llik_beta} \nonumber
\ell(\by_{1:n}|\btheta) 
&= - \sum_{i=1}^{n} \left( \frac{1}{y_{1,i}} + \frac{1}{y_{2,i}} \right) \sum_{j=0}^{k} \beta_j \; b_j(t_i;k)\\ \nonumber
& \\ \nonumber
&+ \sum_{i=1}^{n} \log\Bigg\{ \, \Big( \sum_{j=0}^{k} \beta_j \; b_j(t_i;k) - t_i \, k \, \sum_{j=0}^{k-1} (\beta_{j+1} - \beta_j) \; b_j(t_i;k-1) \Big)\\ \nonumber
& \\ \nonumber
&\times \frac{\sum_{j=0}^{k} \beta_j \; b_j(t_i;k) + (1 - t_i) k \, \sum_{j=0}^{k-1} (\beta_{j+1} - \beta_j) \; b_j(t_i;k-1)} 
{(y_{1,i}y_{2,i})^2}\\ \nonumber
& \\ 
&+ \frac{k\, (k-1) \, \sum_{j=0}^{k-2} (\beta_{j+2} - 2 \, \beta_{j+1} + \beta_j) \; b_j(t_i;k-2)}{(y_{1,i}+y_{2,i})^3} \, \Bigg\},
\end{align}
where $\btheta=(k,\beta_0,\ldots,\beta_{k})\in\bTheta\subseteq (\nat\times \spa^{k+1})$. 
We denote by $\lik(\by_{1:n}|\btheta)$ the associated likelihood function.
We may once again apply Proposition \ref{prop:equivalence_AH},  to obtain the log-likelihood function in terms of 
$\btheta=(k,\eta_0,\ldots,\eta_{k-1})\in\bTheta\subseteq (\nat\times \spa^{k})$ which, abusing terminology, can be seen as a reparametrization. More formally, this corresponds to the representation of the distribution \eqref{eq:angular_dist}, in the stable-tail dependence function \eqref{eq:stabletail}, by means of a polynomial angular distribution given by the expression \eqref{eq:bpoly_angdist}.

There is no closed form for the posterior distribution $\Pi^n(\btheta|\by_{1:n})$ which is proportional to $\Pi(\btheta) \lik(\by_{1:n}| \btheta),$ regardless of the representation considered.
For this reason, we base the model inference on a complex MCMC 
posterior simulation scheme and, to be concise, we only describe the estimation procedure of the polynomial angular distribution, since it has been established that the Pickands dependence function can be obtained through a transformation.
The main difficulty stems from the fact that, at each MCMC iteration, the dimension of the vector of coefficients $\bdeta_k$ changes with $k$. 
We therefore resort to a trans-dimensional MCMC scheme proposed by \citeN{godsill2001} and, in the infinite-dimensional case, applied by \citeN{antoniano2013}. Thus, we extend $\Pi(k, \bdeta_k)$ to 
$$
\Pi(k, \bdeta_\infty) = \Pi ( \bdeta_k | k ) \; \Pi (k) \; \prod_{j>k} \Pi(\eta_j), 
$$
where $\bdeta_\infty=(\eta_0,\eta_1,\ldots)$ denotes an infinite sequence of which, given $k$ only the first $k$ elements are relevant, and $\Pi(\eta_j)$ is any fully known distribution.
In order to update the pair $(k^{(s)},\bdeta^{(s)}_\infty)$ at the current state $s$ of the Markov chain, we propose a Metropolis-Hastings step with the following proposal distribution,
$$
q(k,\bdeta_\infty | k^{(s)},\bdeta^{(s)}_\infty)=
q_k( k | k^{(s)})
\cdot q_\eta(\bdeta_k | k)
\cdot \prod_{ j>k} \Pi(\eta_j)
$$
where $q_\eta(\bdeta_k| k)$ coincides with the conditional prior $\Pi(\bdeta_k|k)$, 
$\Pi(\eta_j)$ is a fully specified density on $\spa$ and
$$
q_k\left(k=k^{(s)}+1|k^{(s)}\right) = 
\begin{cases}
1& \quad \text{if } k^{(s)}=3\\
1/2 & \quad \text{if } k^{(s)}>3\\
\end{cases}
$$
and 
$$
q_k\left(k=k^{(s)}-1|k^{(s)}\right) = 
\begin{cases}
0& \quad \text{if } k^{(s)}=3\\
1/2 & \quad \text{if } k^{(s)}>3.\\
\end{cases}
$$
Thus, given the current state $s$ of the Markov chain and the proposal indexed by $s+1$, the acceptance probability depends on the ratio
\begin{equation*}\label{eq:acceptanceprob}
p\left(k^{(s+1)},\bdeta_{k^{(s+1)}}^{(s+1)},k^{(s)},\bdeta_{k^{(s)}}^{(s)}\right)=
\frac{ \Pi^{n}(k^{(s+1)}, \bdeta^{(s+1)}_\infty| \by_{1:n}) \, q(k^{(s)},\bdeta_{\infty}^{(s)} | k^{(s+1)},\bdeta^{(s+1)}_{\infty})}
{\Pi^{n}(k^{(s)}, \bdeta_\infty^{(s)}| \by_{1:n}) \, q(k^{(s+1)},\bdeta^{(s+1)}_{\infty} | k^{(s)},\bdeta_{\infty}^{(s)})}
\end{equation*}
which, for any $k^{(s)}>3$, simplifies to 
$$
p\left(k^{(s+1)},\bdeta_{k^{(s+1)}}^{(s+1)},k^{(s)},\bdeta_{k^{(s)}}^{(s)}\right)=
%
\frac{\Pi(k^{(s+1)})}{\Pi(k^{(s)})}\;
\frac{\lik(\by_{1:n};k^{(s+1)},\bdeta_{k^{(s+1)}}^{(s+1)})}{\lik(\by_{1:n};k^{(s)},\bdeta^{(s)}_{k^{(s)}})},
$$
For $k^{(s)}=3$, we have $k^{(s+1)}=k^{(s)}+1$ with probability one, so there is a $1/2$ factor multiplying the ratio. 

This leads to the following algorithm.
\begin{algo}\label{algo:metro_etas}
MCMC scheme to draw samples from the posterior distribution $\Pi^{n}(k, \bdeta_{k}| \by_{1:n})$ of the polynomial order and coefficients.
\begin{enumerate}
\item Set $s=0$ and some starting values for the parameters
$\left( k^{(s)}, \bdeta_{k^{(s)}}^{(s)} \in \spE_{k^{(s)}}\right)$;
%
%
\item Repeat $M$ times the update of the parameters according to:
\begin{enumerate}
\item Draw the proposals: 
$$
k^{(s+1)}\sim q_k(k|k^{(s)}) \text{  and  } \bdeta_{k^{(s+1)}}^{(s+1)}\sim q_{\bdeta}(\bdeta_k| k^{(s+1)}, k^{(s)}, \bdeta_{k^{(s)}});
$$
\item Compute the acceptance probability:
$$
p = \min \Bigg(p\left(k^{(s+1)},\bdeta_{k^{(s+1)}}^{(s+1)},k^{(s)},\bdeta_{k^{(s)}}^{(s)}\right),\;1\Bigg);
$$
\item Draw  $U\sim\mbox{unif}\,(0,1)$ and if $U>p$ then set:
$$
\Big(k^{(s+1)},\,\bdeta_{k^{(s+1)}}^{(s+1)}\Big)=\Big(k^{(s)},\,\bdeta_{k^{(s)}}^{(s)}\Big);
$$
\item Set $s=s+1$;
\end{enumerate}
\end{enumerate}
\end{algo}
Thus, after an appropriate burn-in period of, say $m$ iterations, the
sequence $(k^{(s)}\,\bdeta_{k^{(s)}}^{(s)})_{s=m+1}^{M}$
provides a sample from the posterior distribution $\Pi^{n}(k, \bdeta_{k}| \by_{1:n})$.

An important goal of an extreme value analysis is to predict the probability of future simultaneous exceedances.
A simple way to do so is to use formula \eqref{eq:simu_excess}. This task can be fully performed, within the Bayesian paradigm, through a Monte Carlo estimate of the posterior predictive distribution, i.e.
\begin{equation}\label{lab:predictive}
\prob(Y_1>y^*_1, Y_2>y^*_2|\by_{1:n})=\int_{\btheta \in \bTheta}
\prob(Y_1>y^*_1, Y_2>y^*_2|\btheta) \, \Pi^n(\btheta|\by_{1:n}) \, \diff \btheta,
\end{equation}
where $y_1^*, y_2^*>0$ are unobserved thresholds. For each element of the posterior sample, applying expressions
\eqref{eq:simu_excess} 
 and \eqref{eq:bpoly_angdist}, 
we have that
\begin{eqnarray*}\label{eq_simu_exc_bern}
\prob(Y_1>y^*_1, Y_2>y^*_2|\btheta)&=&\frac{1}{k}\sum_{j=0}^{k-2}(\eta_{j+1}-\eta_{j}) \\
&\times& \Bigg(\frac{(j+1) \, \text{B}\big(\, y^{*}_1/(y^{*}_1+y^{*}_2) \; | \; j+2,k-j-1\big)}{y^{*}_1} \\
&+& \frac{(k-j-1) \, \text{B}\big( \, y^{*}_2/(y^{*}_1+y^{*}_2) \; |\; k-j,j+1\big)}{y^{*}_2}
\Bigg),
\end{eqnarray*}
where $B(x|a,b)$, for $x\in\spa$, denotes the cumulative distribution function of a Beta random variable with shape parameters $a,b>0$. Therefore, an estimate can be obtained by averaging these quantities over the complete posterior sample.

The efficacy of our proposed model and inference methodology is numerically illustrated in the next section.
%

%
\section{Numerical Examples}\label{sec:simu}
%

We illustrate the performance and flexibility of our methodology 
through a simulation study in which the extremal dependence of some well-known parametric models is inferred.
In particular, we consider the symmetric logistic (SL) model (\citeNP[p. 146]{coles01}), the asymmetric logistic (AL) model \cite{tawn90}, the H\"{u}sler-Reiss (HR) model \cite{husler+r89} and  the Extremal-$t$ (ET) model  \cite{nikoloulopoulos2009}.

For each model, a sample of $n=100$ bivariate observations with common unit Fr\'{e}chet marginal distributions is simulated. 
Using such datasets, MCMC posterior samples of the angular measure and the Pickands dependence function are simulated via Algorithm \ref{algo:metro_etas} and compared 
with the theoretical functions (see Figures \ref{fig:AL} and \ref{fig:SL-HR-ET}).
After a burn-in period of $m = 400$ thousand iterations, $100$ thousand samples are considered.
Figure \ref{fig:AL} displays the results obtained using the AL model with a mild dependence structure. In particular,
the dependence parameter is $\alpha=0.6$ and the asymmetry parameters are $(\tau_{1}, \tau_{2}) = (0.3, 0.8)$.
The four columns report the results attained using different prior distributions for the polynomial degree $k$, when
modeling the distribution function $H_{k-1}$. Precisely, from left to right, a Poisson distribution with mean 
$\kappa_P = 7 $ and a negative binomial with parameters ($\kappa_{NB} = 0.57, \sigma^2 = 0.73$),  
($\kappa_{NB} = 12.40, \sigma^2 = 23.66$), ($\kappa_{NB} = 3.2, \sigma^2 = 4.48$) have been considered.
The third row shows the prior and posterior distributions for $k$, 
in green and red, respectively.  The posterior median values are equal to 9, 3, 13 and 5, respectively for the four cases, 
from left to right.
In the fourth row the prior (green line) and posterior (red line) distributions for the atom $p_{0}$, in addition to its true value $p_0=(1-\tau_2)/2=0.35$ (black dashed line) are reported.
For all the cases, we see that most of the mass of the posterior distribution is concentrated close to the true value.
The corresponding median values of the posterior distributions for $p_0$ are 0.351, 0.342, 0.358 and 0.349, from left to right. 
For the atom $p_1=(1-\tau_1)/2=0.10$
we obtain the median values 0.149, 0.155, 0.165 and 0.139. Then, we can conclude that the information about the
point masses at the edges of the unit interval is well reproduced. The first and second rows report
the point-wise mean (red line) and the point-wise $95\%$ credibility bands (in grey) computed through the posterior samples of the angular density and the Pickands dependence function, respectively. 
The credibility bands are the point-wise $0.05$- and $0.95$-quantiles of the posterior samples. 
The solid black lines are the true functions. 
 In the first row, the true point masses on the edges are represented by black dots and the means computed from the posterior distributions are represented by red dots.
The grey points are $95\%$ upper and lower limits of the credibility intervals for the point masses. 
The true functions (angular density and Pickands) 
and the point masses fall within the point-wise $95\%$ credibility bands in most of the cases, pointing out that our
inferential method captures the dependence structure quite well. 
In the four cases, the results are quite similar; only in the  third column (from the left), 
the $95\%$ credibility bands do not include the true functions in a few points. 
So, it seems that our method is not too sensitive to the prior distribution. 
The fifth row reports the Monte Carlo predictive probabilities (red lines) of future simultaneous exceedances \eqref{lab:predictive} for pairs of unobserved thresholds $(y^{*}_{1},y^{*}_{2})$ ranging between $10$ and $100$.
The black lines are the true probabilities. In the second and fourth cases (from the left)
 the estimates are very accurate, while they are less so for the first and the third.

Figure \ref{fig:SL-HR-ET} reports (from left to right) the results obtained for data generated from the SL model with mild and weak dependence structures (denoted by SLm and SLw) given by dependence parameter values $\alpha=0.45$ and $0.85$; the HL and the ET models with mild dependence given by the dependence parameters $\lambda = 1.2$ and ($\omega=0.8$, $\nu=2$), respectively. The format of the graphs is the same as that of the previous figure. The results are obtained using the same prior distribution for the polynomial degree $k$, i.e. a Poisson distribution with mean $\kappa_P = 7$. 
Other prior settings can be considered (skipped here for brevity) and, as the previous study shows, the results do not change significantly.
In the four cases, the posterior median values for $k$ are 6, 9, 7 and 8. The only model that
includes point masses on the edges is the ET, 
corresponding to $p_0=p_1=T_{\nu+1}(-\omega\{(\nu+1)/(1-\omega^2)\}^{1/2})=0.104$, where $T_{\nu+1}(\cdot)$ denotes a $t$ distribution with $\nu+1$ degrees of freedom. The medians of the posterior distributions for the point masses are (0.018, 0.041), (0.235, 0.151), (0.047, 0.045) and (0.041, 0.097), respectively for the four cases. We see, from the 
first and second row-panels, that the posterior distributions adequately capture the different extremal dependence forms. 
Also, the plots in the last row show 
rather accurate predictions of the probabilities of joint exceedances, outlining the good performance of our inferential method.

Finally, going beyond visual checks, we measure the accuracy of our proposed method. To do so, we focus on
the Pickands dependence function and we compute, for each element of the posterior
MCMC sample, the integrated squared error: 
\begin{equation*}
\label{eq:mise}
\mbox{ISE}(A^{(s)},A) = \int_0^1 \Big(A^{(s)}(t) - A(t)\Big)^2 \diff t, 
\end{equation*}
where $A$ is the true Pickands dependence function and $A^{(s)}$, $s=1,\ldots,m$, is a Pickands dependence function sampled from the posterior. Table \ref{tab:MISE} reports, for different sample sizes (first column) and for 
each of the four models considered in Figures \ref{fig:AL} and \ref{fig:SL-HR-ET} (second column), 
the Monte Carlo posterior mean of the ISE (third column). Between parenthesis the $0.05$- and 
$0.95$- quantiles of the posterior distribution for the ISE are reported. For comparison purposes,
the fourth and fifth columns report similar estimates obtained using the projection method
discussed in \shortciteN{marcon+p+n+m15} and focusing on the multivariate madogram (MD) and Cap\'{e}ra\`{a}-Foug\`{e}res-Genest (CFG, \shortciteNP{cap+f+g97}) estimators as pilot estimates, see \shortciteN{marcon+p+n+m15} for details. In particular, for each dataset, $500$ bootstrap replicates are produced and for each of these
the ISE is computed, where in this case $A^{(s)}$ is the estimated Pickands dependence function obtained
with the projection method. 
In the table, the mean and the $0.05$- and 
$0.95$- quantiles (in parenthesis) of the ISE computed over the $500$ bootstrap replicates, are reported.
Results in Table \ref{tab:MISE} show the slightly better performance of our proposed method with respect to the competitors, for the several examples considered. This supports our new proposal.

We have also compared our inferential approach with other proposals as for instance that in \shortciteN{einmahl2008} (see also, \shortciteNP{kluppelberg2007}, \citeNP{krajina2012}).
They proposed a parametric method for estimating the tail of a bivariate distribution that is in the domain of attraction of a bivariate extreme value distribution \cite[Ch. 6]{dehaan+f06}. Instead, we directly model the extremal dependence of bivariate extreme value distributions. Thus, care must be taken when interpreting the results, which for brevity are not presented here. For specific parametric families of dependence models, their parametric method outperforms our nonparametric proposal. However, the integrated squared error (used for comparison) is of the same order in both techniques, suggesting that our model-free proposal is equally appealing, in addition to providing wide applicability.

In conclusion, we stress that the computational cost of running our proposed Bayesian model is moderately low. For example, to run $M = 500$ thousand iterations of the MCMC algorithm, it takes only 114.03 seconds, with an intel Core i7 processor at 2.2 GHz. The code for the model fitting will soon be available with
the {\tt R}-package {\tt ExtremalDep}. The data simulation was performed using the {\tt R}-package {\tt EVD} \cite{stephenson2004}.

%
%
\begin{table}[h]
\begin{center}
\caption{Mean, $95\%$ credibility intervals (Bayesian method) and $95\%$ bootstrap confidence intervals (projection method) of the ISE for the models in Figures \ref{fig:AL} and \ref{fig:SL-HR-ET}, for increasing sample sizes. }
{\footnotesize \begin{tabular}{cccc}
\toprule
Model	&	 \multicolumn{3}{c}{Inferential methods}\\	
\midrule
&\multicolumn{3}{c}{Sample size $25$}\\							
 	&	Bayesian 	& Projection-MD &	Projection-CFG \\
\midrule 
AL & $ 2.35\times 10^{-3} $ & $ 5.10\times 10^{-3} $ & $ 1.13\times 10^{-2} $ \\
& ($3.53\times 10^{-4} ; 5.65\times 10^{-3}$) & ($ 1.81\times 10^{-4} ; 1.90\times 10^{-2} $) & ($8.02 \times 10^{-4} ; 2.45\times 10^{-2} $) \\
 $\text{SLm}$ & $ 7.64\times 10^{-3} $ & $ 6.63\times 10^{-3} $ & $ 1.47\times 10^{-3} $ \\
& ($ 8.33\times 10^{-4} ; 2.09\times 10^{-2}$) & ($ 8.57\times 10^{-5} ; 3.01\times 10^{-2} $) & ($ 6.15\times 10^{-5} ; 6.98\times 10^{-3} $) \\
 $\text{SLw}$ & $ 1.75\times 10^{-3} $ & $ 3.81\times 10^{-3} $ & $ 4.36\times 10^{-3} $ \\
& ($1.23\times 10^{-4} ; 4.21\times 10^{-3}$) & ($ 3.26\times 10^{-4} ; 6.10\times 10^{-3} $) & ($ 3.95\times 10^{-4} ; 1.31\times 10^{-2} $) \\
 HR & $ 8.75\times 10^{-3} $ & $ 4.58\times 10^{-3} $ & $ 6.75\times 10^{-3} $ \\ 
& ($ 4.95\times 10^{-4} ; 1.75\times 10^{-2}$) & ($ 3.10\times 10^{-4} ; 9.94\times 10^{-3} $) & ($ 1.51 \times 10^{-3} ; 9.92\times 10^{-3} $) \\
 ET &  $ 3.43\times 10^{-2} $ & $ 7.00\times 10^{-2} $ & $ 6.55\times 10^{-2} $ \\
& ($2.35\times 10^{-2} ; 5.18\times 10^{-2}$) & ($ 6.17 \times 10^{-2} ; 8.63\times 10^{-2} $) & ($ 6.18\times 10^{-2} ; 7.31\times 10^{-2} $) \\
\midrule
&\multicolumn{3}{c}{Sample size $50$}\\							
&	Bayesian 	& Projection-MD &	Projection-CFG \\
\midrule 
AL & $ 1.23\times 10^{-3} $ & $ 2.04\times 10^{-3} $ & $ 1.96\times 10^{-3} $ \\
& ($4.73\times 10^{-5} ; 4.09\times 10^{-3}$) & ($ 1.10\times 10^{-4} ;6.48\times 10^{-3} $) & ($8.67 \times 10^{-5} ;6.67\times 10^{-3} $) \\
 $\text{SLm}$ & $ 1.76\times 10^{-3} $ & $ 6.52\times 10^{-4} $ & $ 4.17\times 10^{-4} $ \\
& ($1.16\times 10^{-4} ; 5.04\times 10^{-3}$) & ($ 2.53\times 10^{-5} ;2.35\times 10^{-3} $) & ($1.87 \times 10^{-5} ; 1.18\times 10^{-3} $) \\
 $\text{SLw}$ & $ 1.47\times 10^{-3} $ & $ 2.14\times 10^{-3} $ & $ 2.33\times 10^{-3} $ \\
& ($9.18\times 10^{-5} ; 3.89\times 10^{-3}$) & ($ 3.74\times 10^{-4} ;5.59\times 10^{-3} $) & ($2.39 \times 10^{-4} ; 7.08\times 10^{-3} $) \\
 HR & $ 8.87\times 10^{-4} $ &	 $ 2.71\times 10^{-3} $ & $ 4.38\times 10^{-3} $ \\
& ($4.53\times 10^{-5} ; 3.26\times 10^{-3}$) & ($ 2.47\times 10^{-4} ; 6.82\times 10^{-3} $) & ($9.63 \times 10^{-4} ; 8.29\times 10^{-3} $) \\
 ET& $ 3.20\times 10^{-2} $ & $ 7.46\times 10^{-2} $ &	 $ 7.08\times 10^{-2} $ \\
& ($2.42\times 10^{-2} ;4.51\times 10^{-2}$) & ($ 6.68\times 10^{-2} ;8.52\times 10^{-2} $) & ($6.53 \times 10^{-2} ; 7.69\times 10^{-2} $) \\
\midrule
&\multicolumn{3}{c}{Sample size $100$}\\							
&	Bayesian 	& Projection-MD &	Projection-CFG \\
\midrule 
AL &$ 5.71\times 10^{-4} $ &	 $ 9.48\times 10^{-4} $ &	 $ 6.51\times 10^{-4} $ \\
& ($1.60\times 10^{-5} ;2.02\times 10^{-3}$) & ($ 7.10\times 10^{-5} ;6.47\times 10^{-3} $) & ($2.98 \times 10^{-5} ;2.30\times 10^{-3} $) \\
 $\text{SLm}$ &$ 3.58\times 10^{-4} $ & 	 $ 1.85\times 10^{-4} $ &	 $ 1.91\times 10^{-4} $ \\
& ($7.67\times 10^{-6} ;1.13\times 10^{-3}$) & ($ 1.53\times 10^{-5} ;2.85\times 10^{-4} $) & ($2.10 \times 10^{-5} ;2.84\times 10^{-4} $) \\
 $\text{SLw}$ &$ 8.44\times 10^{-4} $ & 	 $ 1.21\times 10^{-3} $ &	 $ 1.17\times 10^{-3} $ \\
& ($4.77\times 10^{-5} ;2.74\times 10^{-3}$) & ($ 9.67\times 10^{-5} ;3.88\times 10^{-3} $) & ($1.23 \times 10^{-4} ;4.02\times 10^{-3} $) \\
 HR &$ 5.61\times 10^{-4} $ & 	 $ 2.16\times 10^{-3} $ &	 $ 2.37\times 10^{-3} $ \\
& ($3.89\times 10^{-5} ;1.67\times 10^{-3}$) & ($ 2.32\times 10^{-4} ;4.71\times 10^{-3} $) & ($5.38 \times 10^{-4} ;4.24\times 10^{-3} $) \\
 ET &$ 2.49\times 10^{-2} $ & 	 $ 6.66\times 10^{-2} $ &	 $ 6.68\times 10^{-2} $ \\
& ($2.14\times 10^{-2} ;2.99\times 10^{-2}$) & ($ 6.49\times 10^{-2} ;7.23\times 10^{-2} $) & ($6.49 \times 10^{-2} ;7.09\times 10^{-2} $) \\
\midrule
&\multicolumn{3}{c}{Sample size $200$}\\							
&	Bayesian 	& Projection-MD &	Projection-CFG \\
\midrule 
AL &$ 3.76\times 10^{-4} $ &	 $ 6.09\times 10^{-4} $ &	 $ 4.95\times 10^{-4} $ \\
& ($1.87\times 10^{-5} ;1.22\times 10^{-3}$) & ($ 4.50\times 10^{-5} ;1.92\times 10^{-3} $) & ($3.31 \times 10^{-5} ;1.63\times 10^{-3} $) \\
 $\text{SLm}$ &$ 5.62\times 10^{-5} $ &	 $ 4.52\times 10^{-4} $ &	 $ 4.84\times 10^{-4} $ \\
& ($6.45\times 10^{-6} ;1.50\times 10^{-4}$) & ($ 4.69\times 10^{-5} ;1.03\times 10^{-3} $) & ($1.01 \times 10^{-4} ;9.65\times 10^{-4} $) \\
 $\text{SLw}$ &$ 5.16\times 10^{-4} $ &	 $ 8.10\times 10^{-4} $ &	 $ 1.19\times 10^{-3} $ \\
& ($2.87\times 10^{-5} ;1.72\times 10^{-3}$) & ($ 4.59\times 10^{-5} ;2.54\times 10^{-3} $) & ($6.41 \times 10^{-5} ;3.39\times 10^{-3} $) \\
 HR &$ 2.53\times 10^{-4} $ &	 $ 3.91\times 10^{-4} $ &	 $ 3.62\times 10^{-4} $ \\
& ($1.73\times 10^{-5} ;8.55\times 10^{-4}$) & ($ 2.22\times 10^{-5} ;1.20\times 10^{-3} $) & ($2.59 \times 10^{-5} ;1.09\times 10^{-3} $) \\
 ET &$ 2.28\times 10^{-2} $ & $ 6.25\times 10^{-2} $ &	 $ 6.16\times 10^{-2} $ \\
& ($2.09\times 10^{-2} ;2.56\times 10^{-2}$) & ($6.10 \times 10^{-2} ;6.93\times 10^{-2} $) & ($6.15 \times 10^{-2} ;6.93\times 10^{-2} $) \\
\bottomrule													
\end{tabular}}
\label{tab:MISE}
\end{center}
\end{table}
%

\begin{figure}
\centering
\includegraphics[width=.24\textwidth, page=4]{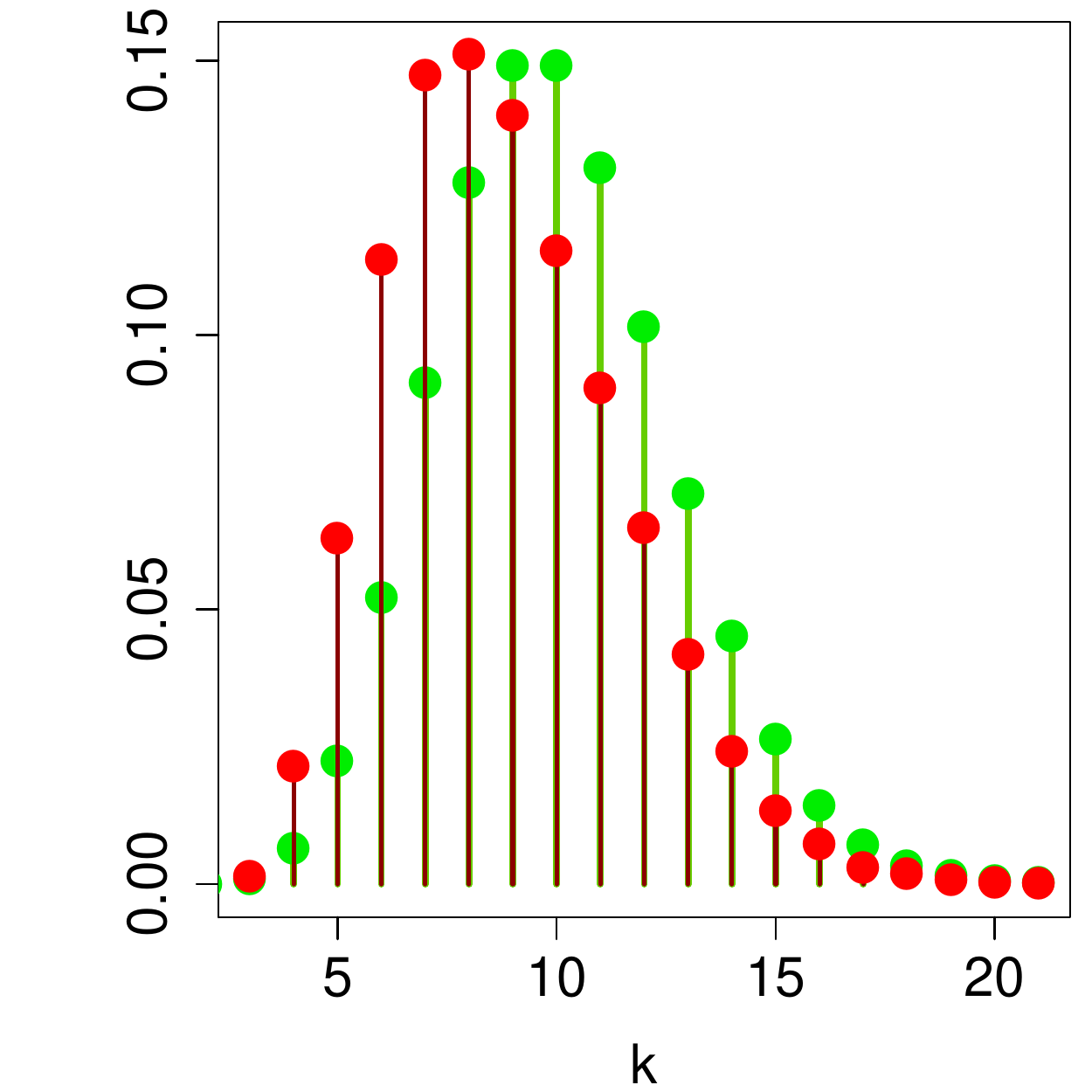}
\includegraphics[width=.24\textwidth, page=4]{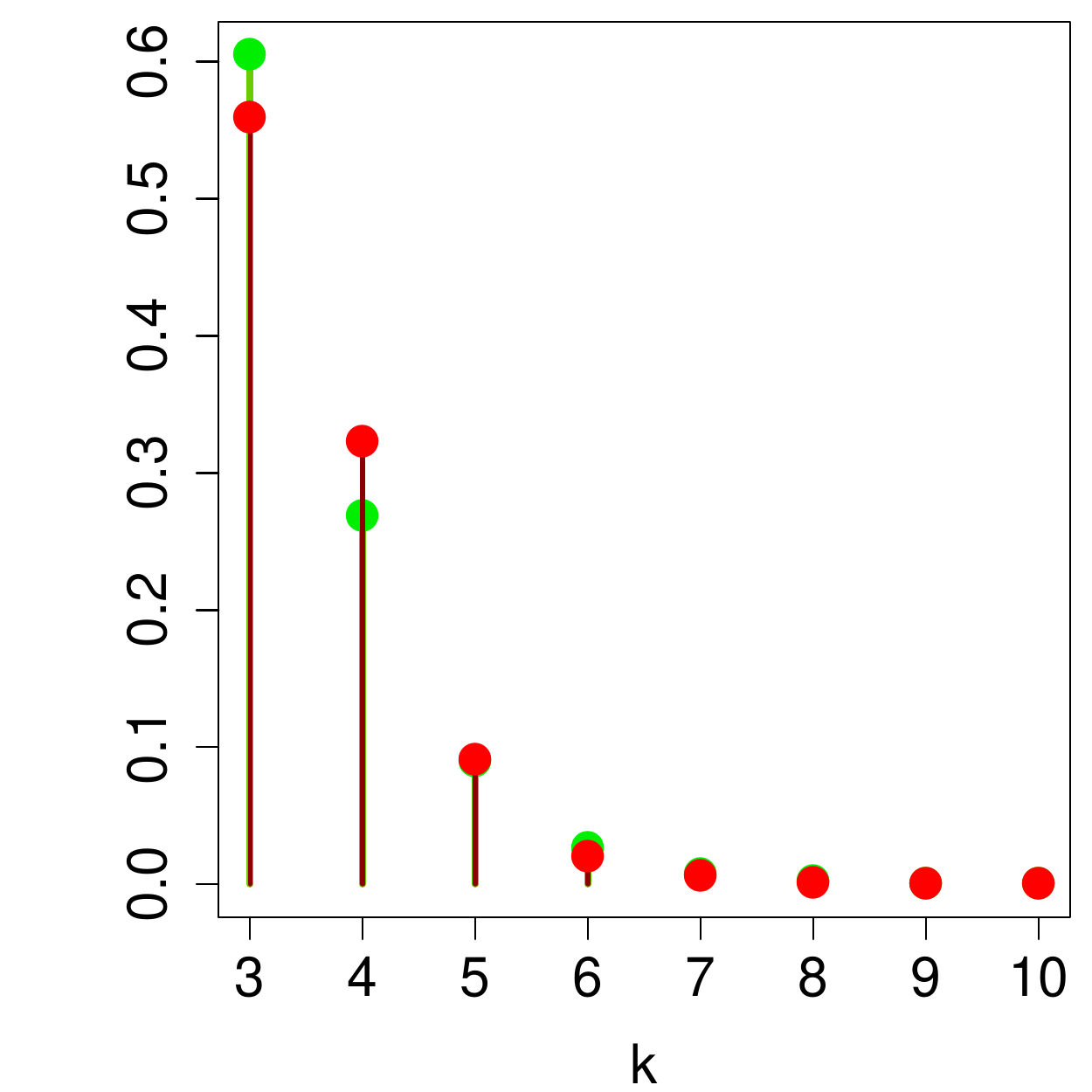}
\includegraphics[width=.24\textwidth, page=4]{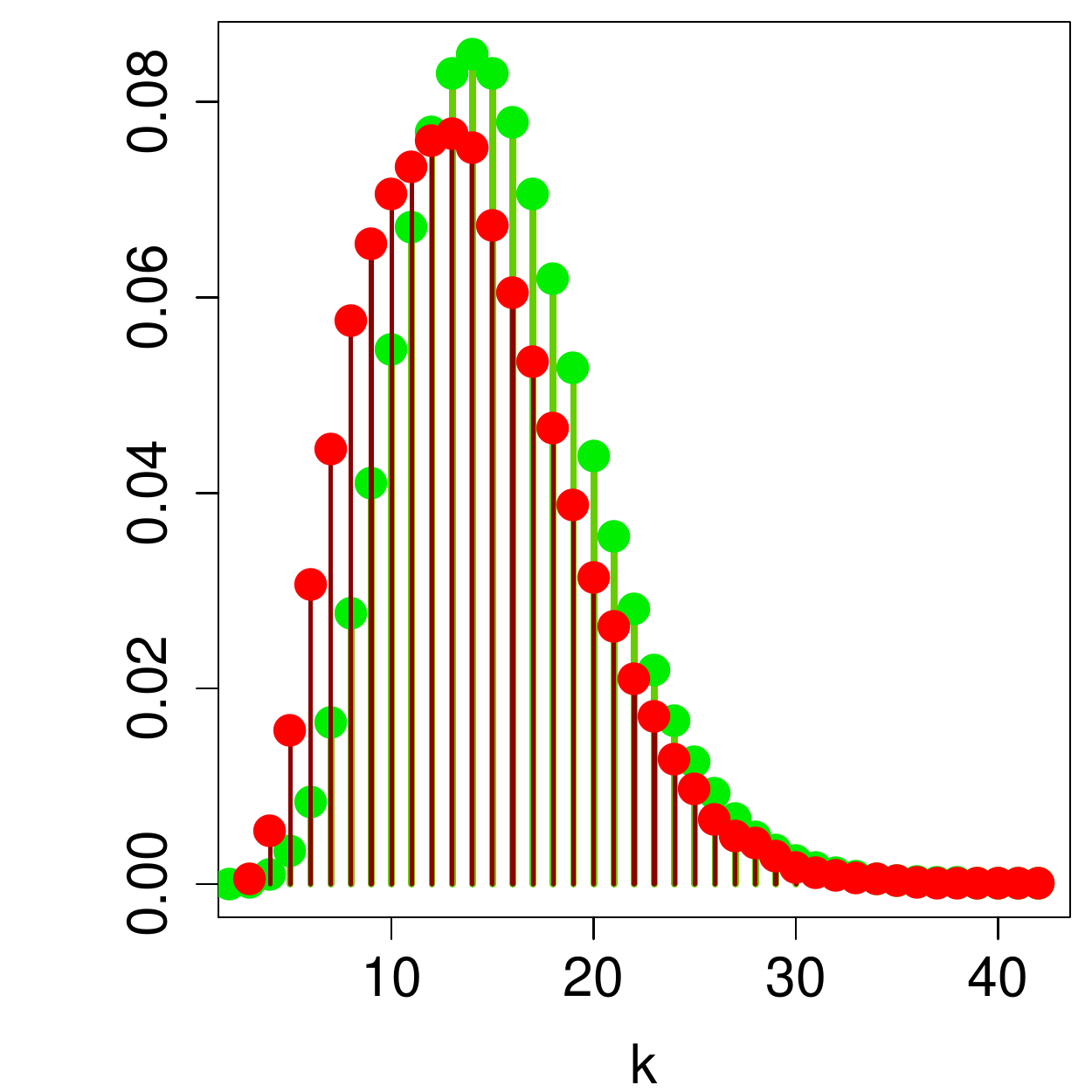}
\includegraphics[width=.24\textwidth, page=4]{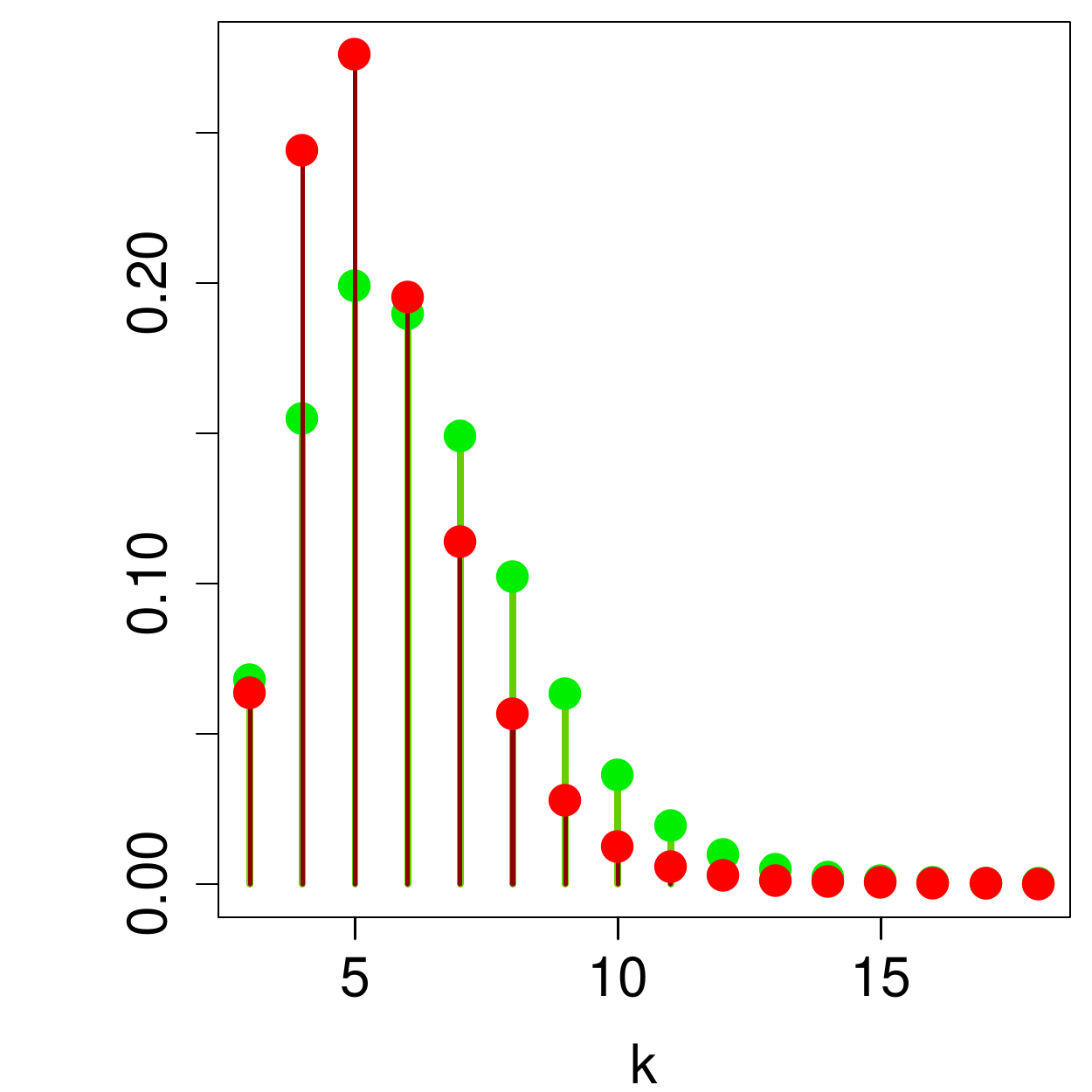}\\
\includegraphics[width=.24\textwidth, page=3]{AL_casoA}
\includegraphics[width=.24\textwidth, page=3]{AL_casoB}
\includegraphics[width=.24\textwidth, page=3]{AL_casoC}
\includegraphics[width=.24\textwidth, page=3]{AL_casoD}\\
\includegraphics[width=.24\textwidth, page=1]{AL_casoA}
\includegraphics[width=.24\textwidth, page=1]{AL_casoB}
\includegraphics[width=.24\textwidth, page=1]{AL_casoC}
\includegraphics[width=.24\textwidth, page=1]{AL_casoD}\\
\includegraphics[width=.24\textwidth, page=2]{AL_casoA}
\includegraphics[width=.24\textwidth, page=2]{AL_casoB}
\includegraphics[width=.24\textwidth, page=2]{AL_casoC}
\includegraphics[width=.24\textwidth, page=2]{AL_casoD}\\
\includegraphics[width=.24\textwidth,page=1]{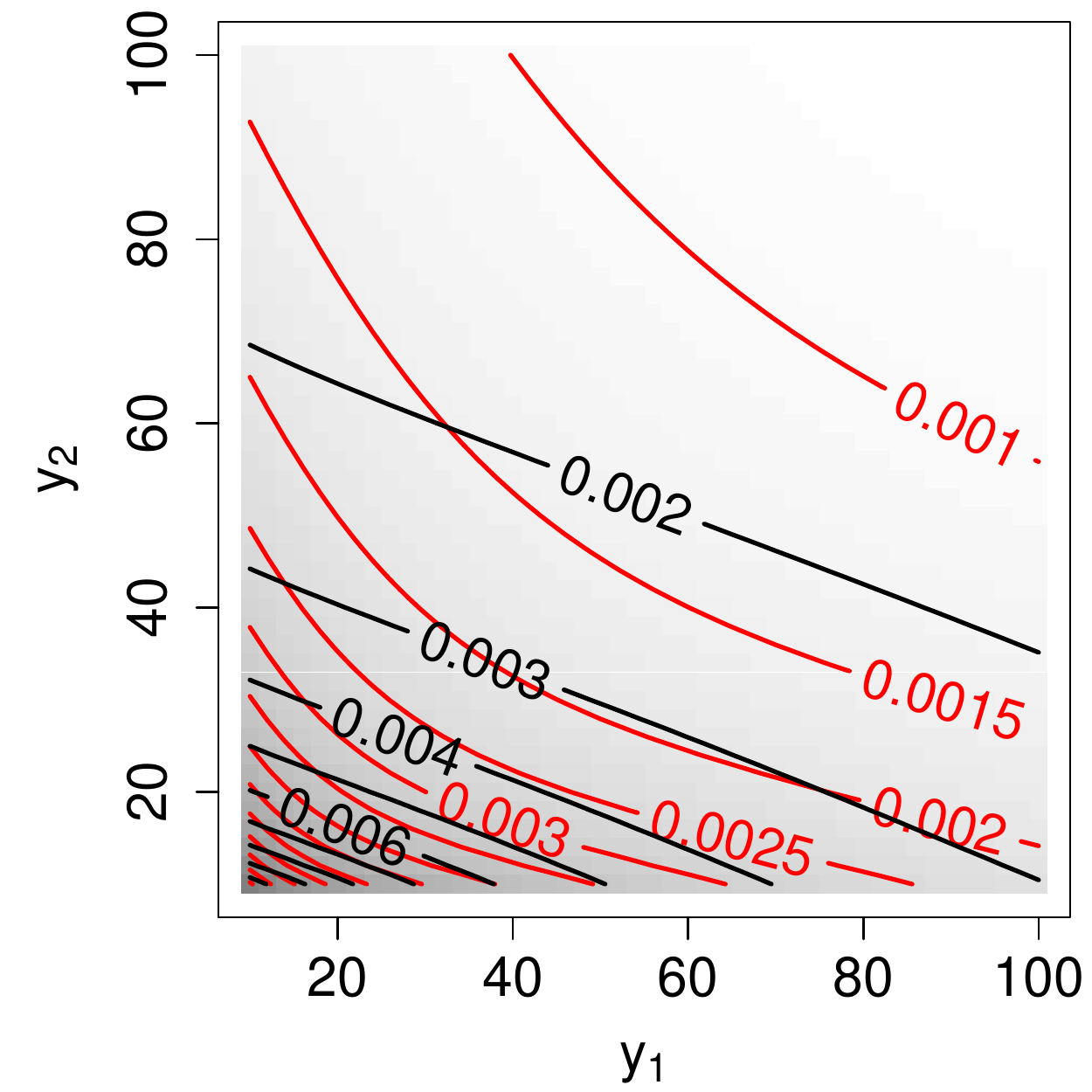}
\includegraphics[width=.24\textwidth,page=2]{returns_AL_casoA-D}
\includegraphics[width=.24\textwidth,page=3]{returns_AL_casoA-D}
\includegraphics[width=.24\textwidth,page=4]{returns_AL_casoA-D}\\
\caption[Summary of the Bayesian nonparametric fitting of the extremal dependence]
{\small Summary of the Bayesian nonparametric fitting of the extremal dependence. The true model is the Asymmetric Logistic model. Different prior distributions for the polynomial's degree $k$ are considered from left to right.}
\label{fig:AL}
\end{figure}

\begin{figure}
\centering
\includegraphics[width=.24\textwidth, page=4]{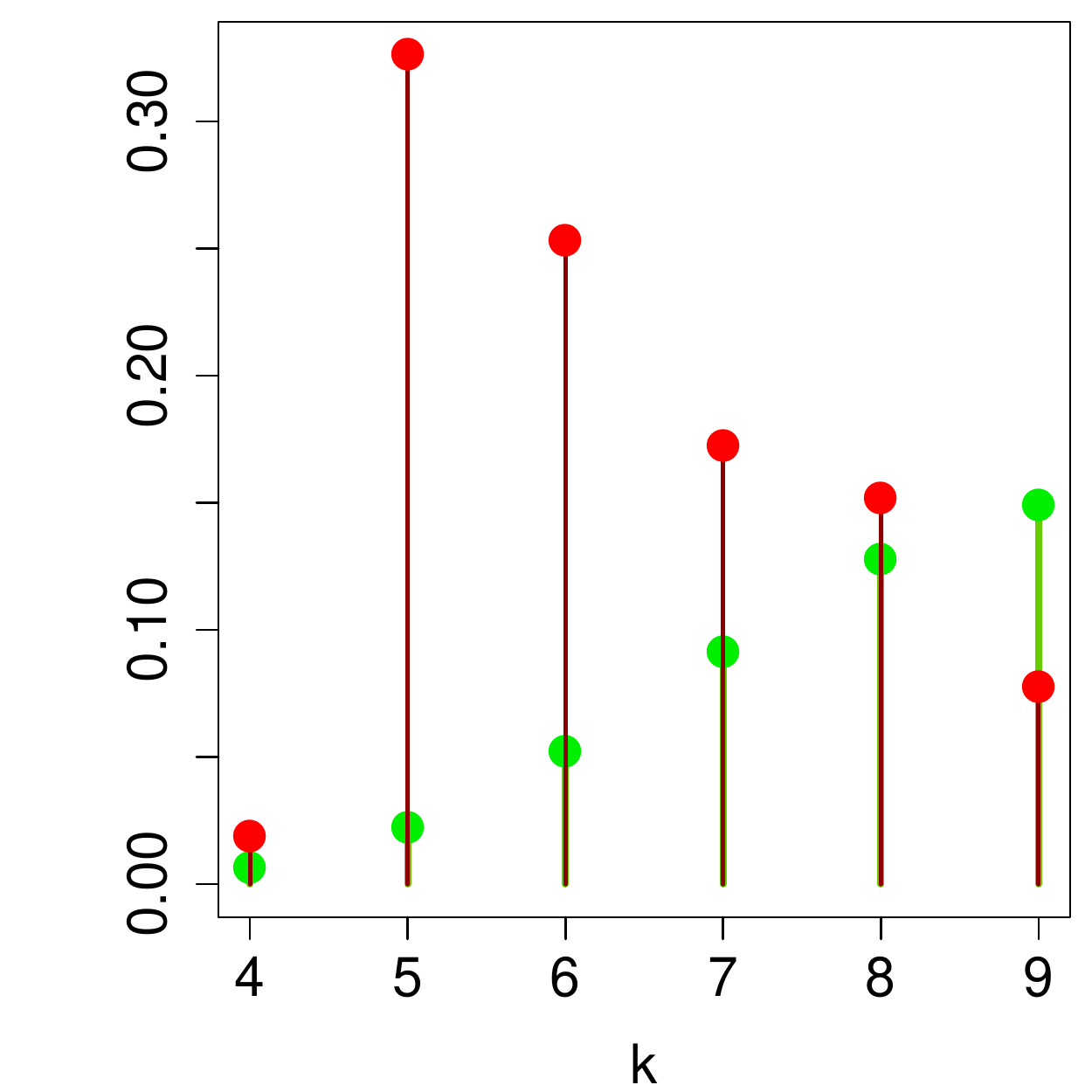}
\includegraphics[width=.24\textwidth, page=4]{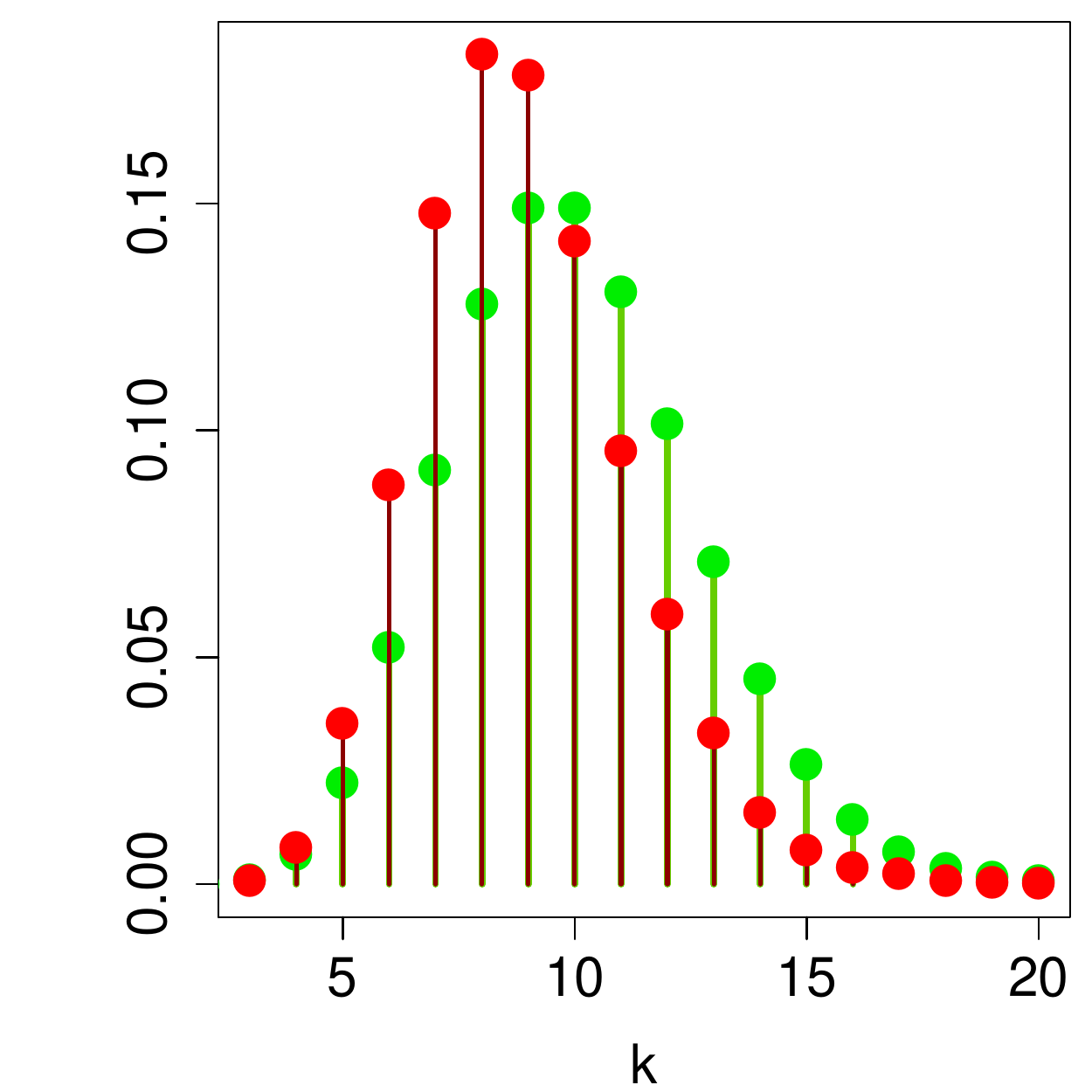}
\includegraphics[width=.24\textwidth, page=4]{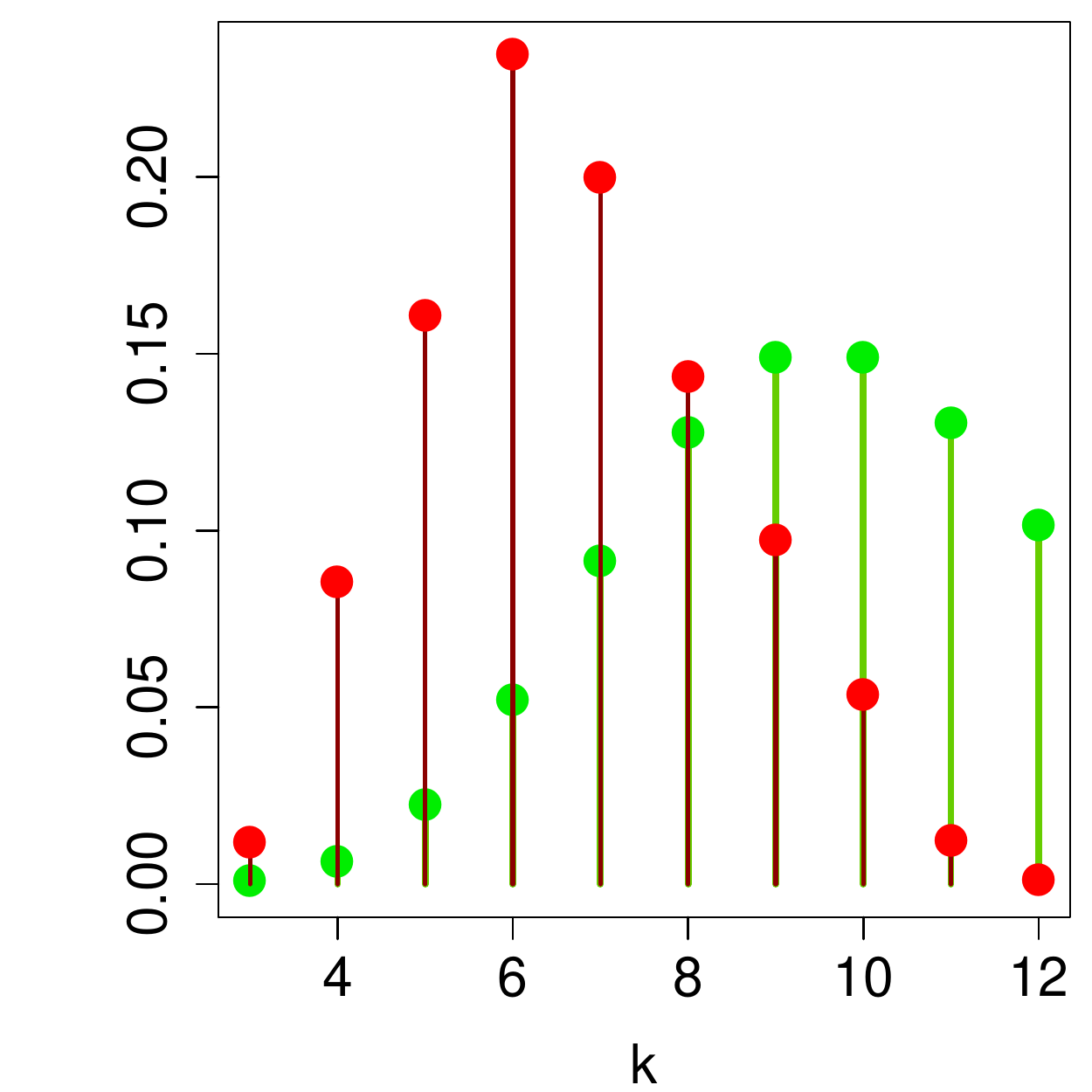}
\includegraphics[width=.24\textwidth, page=4]{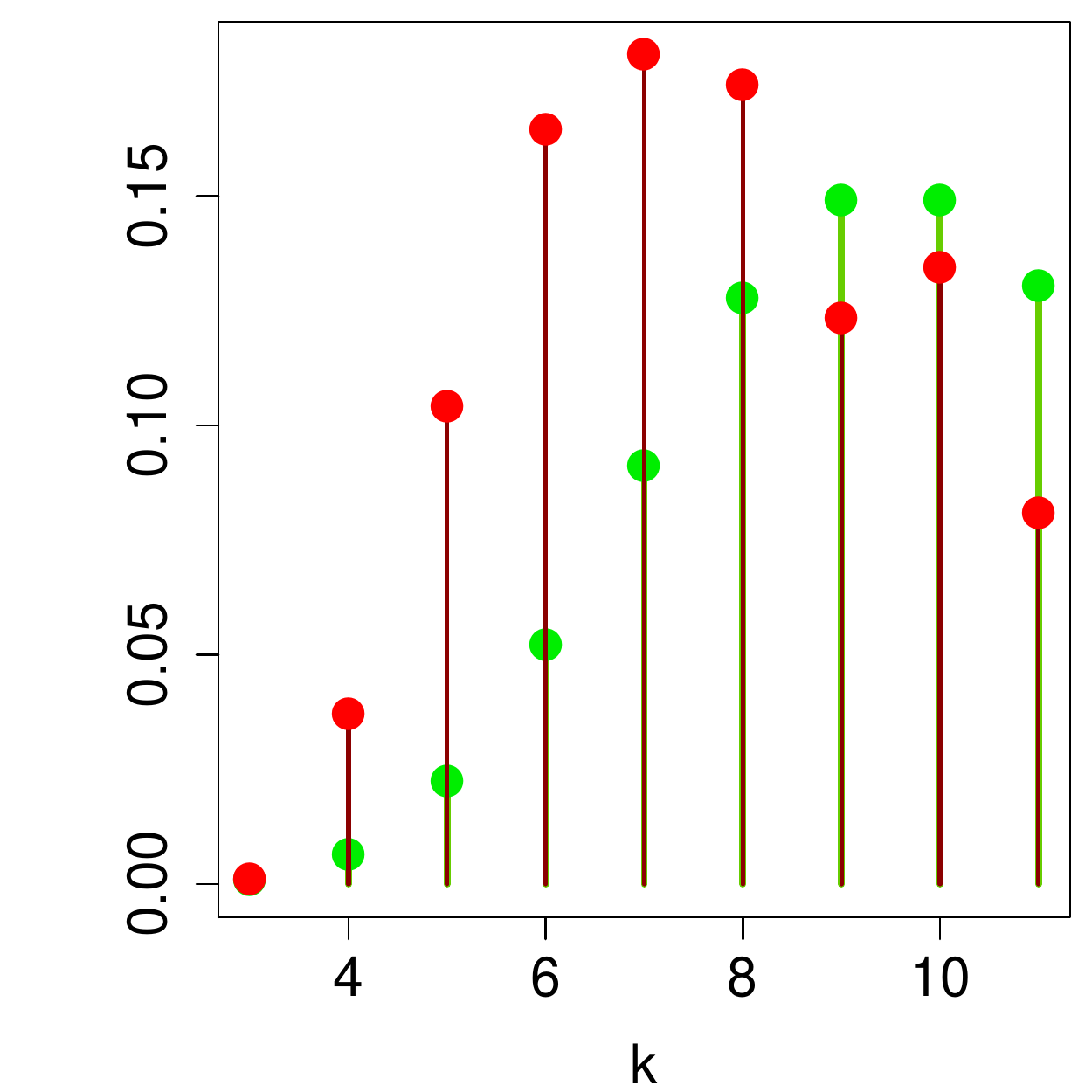}\\
\includegraphics[width=.24\textwidth, page=3]{SL1_pois7}
\includegraphics[width=.24\textwidth, page=3]{SL2_pois7}
\includegraphics[width=.24\textwidth, page=3]{HR_pois7}
\includegraphics[width=.24\textwidth, page=3]{ET_pois7}\\
\includegraphics[width=.24\textwidth, page=1]{SL1_pois7}
\includegraphics[width=.24\textwidth, page=1]{SL2_pois7}
\includegraphics[width=.24\textwidth, page=1]{HR_pois7}
\includegraphics[width=.24\textwidth, page=1]{ET_pois7}\\
\includegraphics[width=.24\textwidth, page=2]{SL1_pois7}
\includegraphics[width=.24\textwidth, page=2]{SL2_pois7}
\includegraphics[width=.24\textwidth, page=2]{HR_pois7}
\includegraphics[width=.24\textwidth, page=2]{ET_pois7}\\
\includegraphics[width=.24\textwidth]{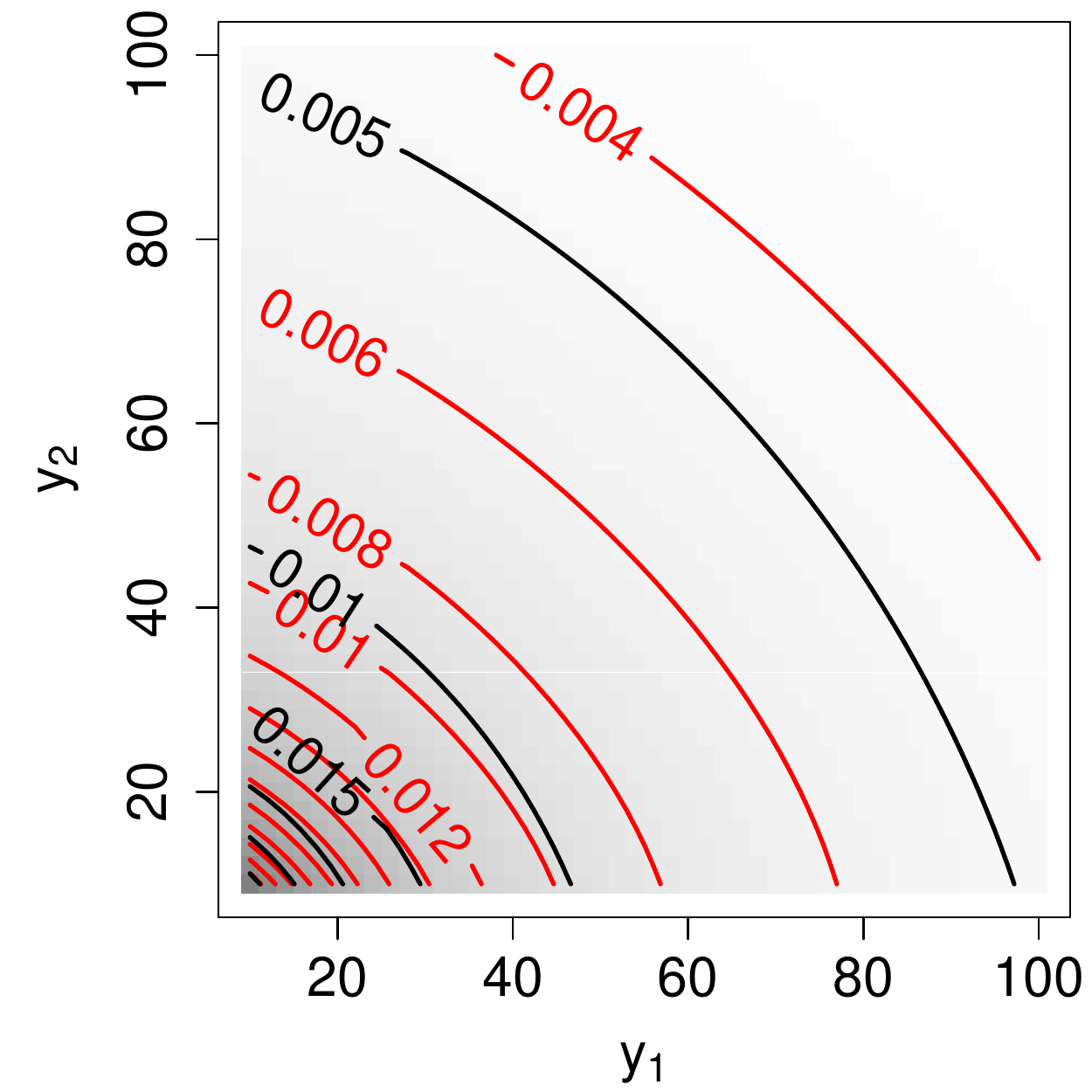}
\includegraphics[width=.24\textwidth]{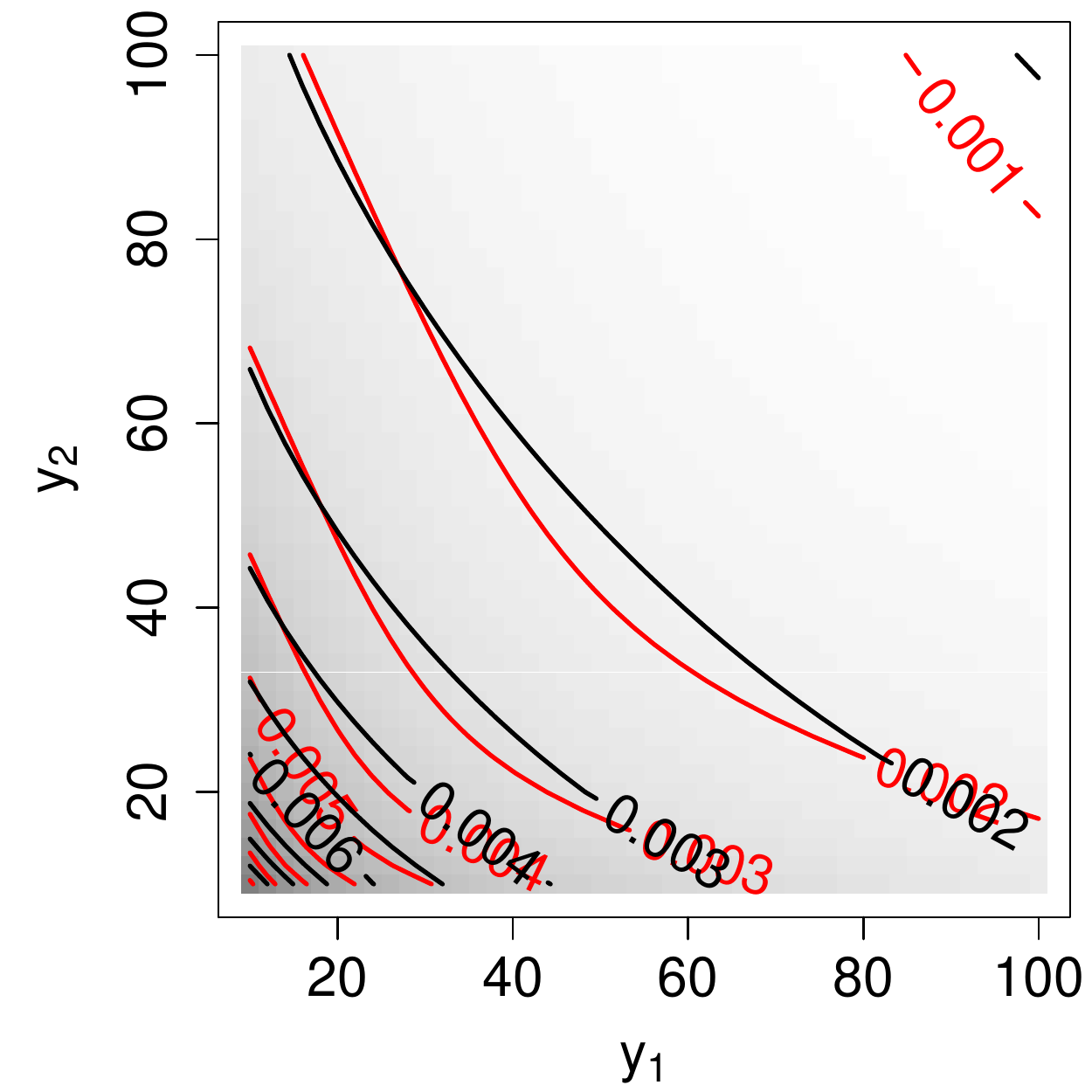}
\includegraphics[width=.24\textwidth]{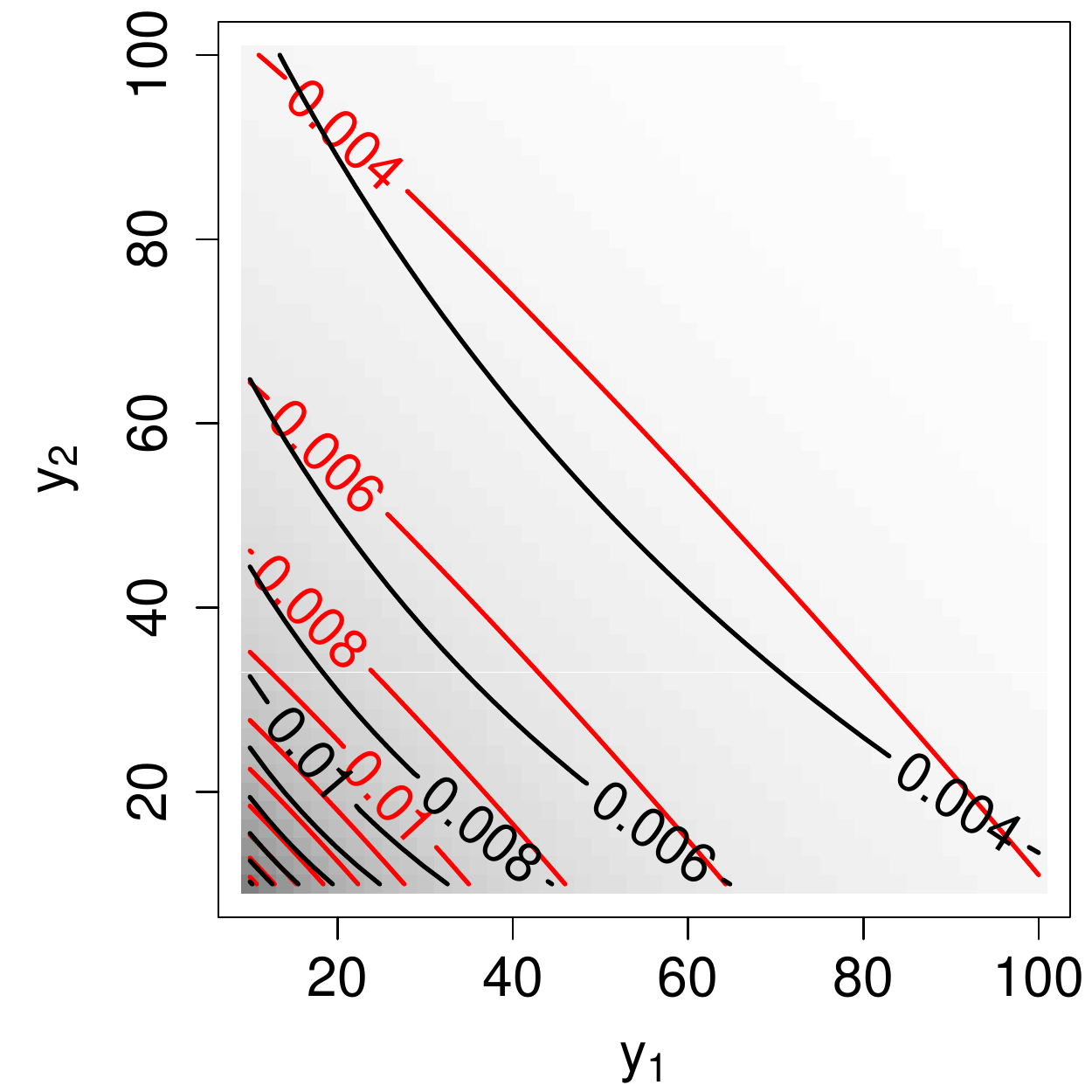}
\includegraphics[width=.24\textwidth]{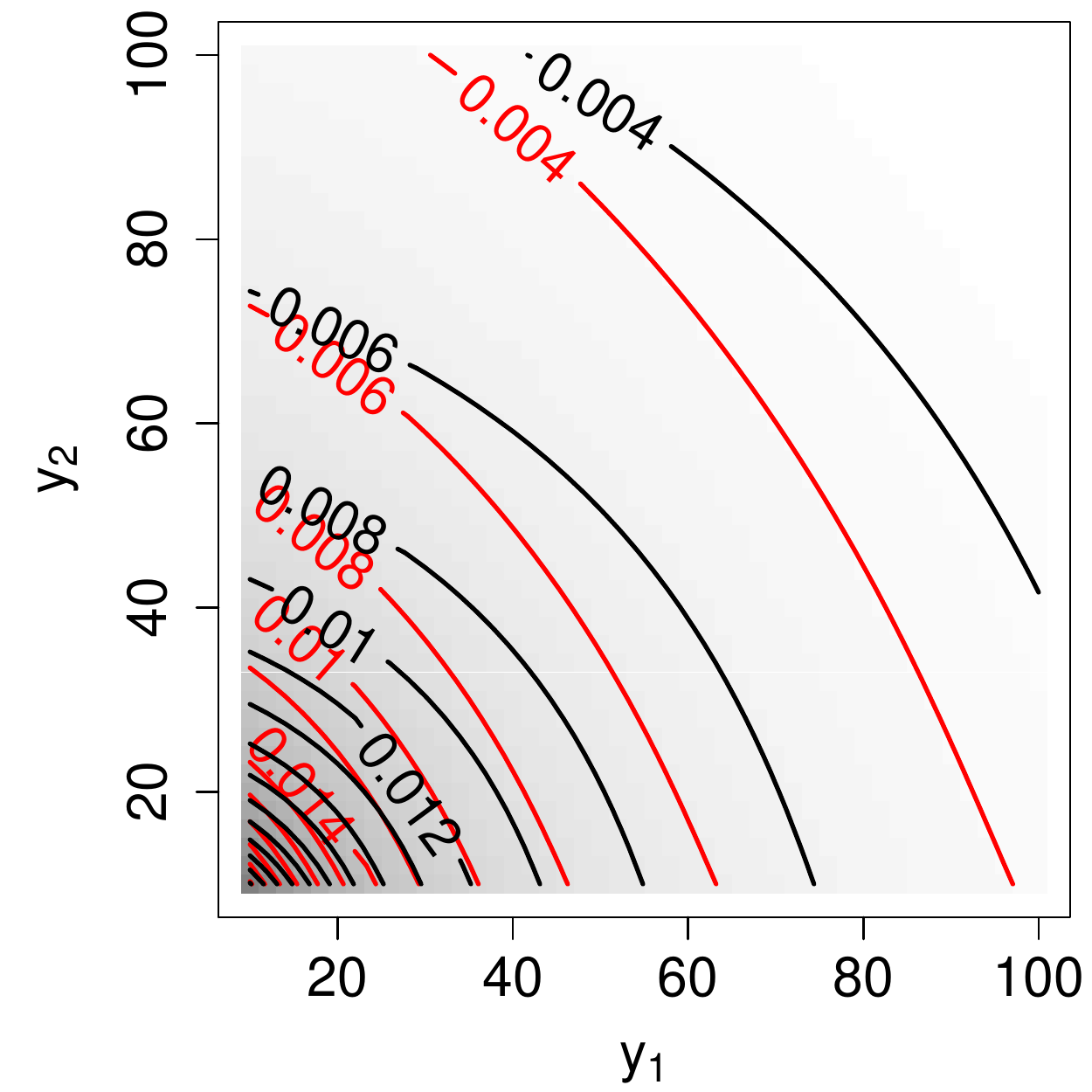}\\
\caption[Bayesian nonparametric estimation of the extremal dependence models]{\small Summary of the Bayesian nonparametric fitting for the extremal dependence model: Symmetric Logistic (mild and weak), 
H\"{u}sler-Reiss and Extremal-$t$.}
\label{fig:SL-HR-ET}
\end{figure}

\section{Analysis of Extreme Log-return Exchange Rates}\label{sec:app}


Predicting exchange rates is one of the most challenging tasks in economics. 
A seminal paper by \citeN{meese1983} showed that predictions of exchange rates based on macroeconomic models are unable to outperform those derived from a random walk.
However, recent literature (e.g. \citeNP{engel2005}) has established a link between exchange rates and fundamental economic principles. The modern asset market approach relies on a supply-and-demand analysis of the exchange rate viewed as the price of domestic assets in terms of foreign assets (\citeNP{madura2014}).
In the short-term, the exchange rate is influenced by a positive interest rate differential, which causes an appreciation of the home currency. In the long-term, a rise in the home country's price level causes the depreciation of its currency, while higher productivity or an increased demand for exports cause the appreciation of the currency (the opposite holds true for an increased demand for imports).
\begin{figure}[b!]
\centering
\includegraphics[width=.85\textwidth]{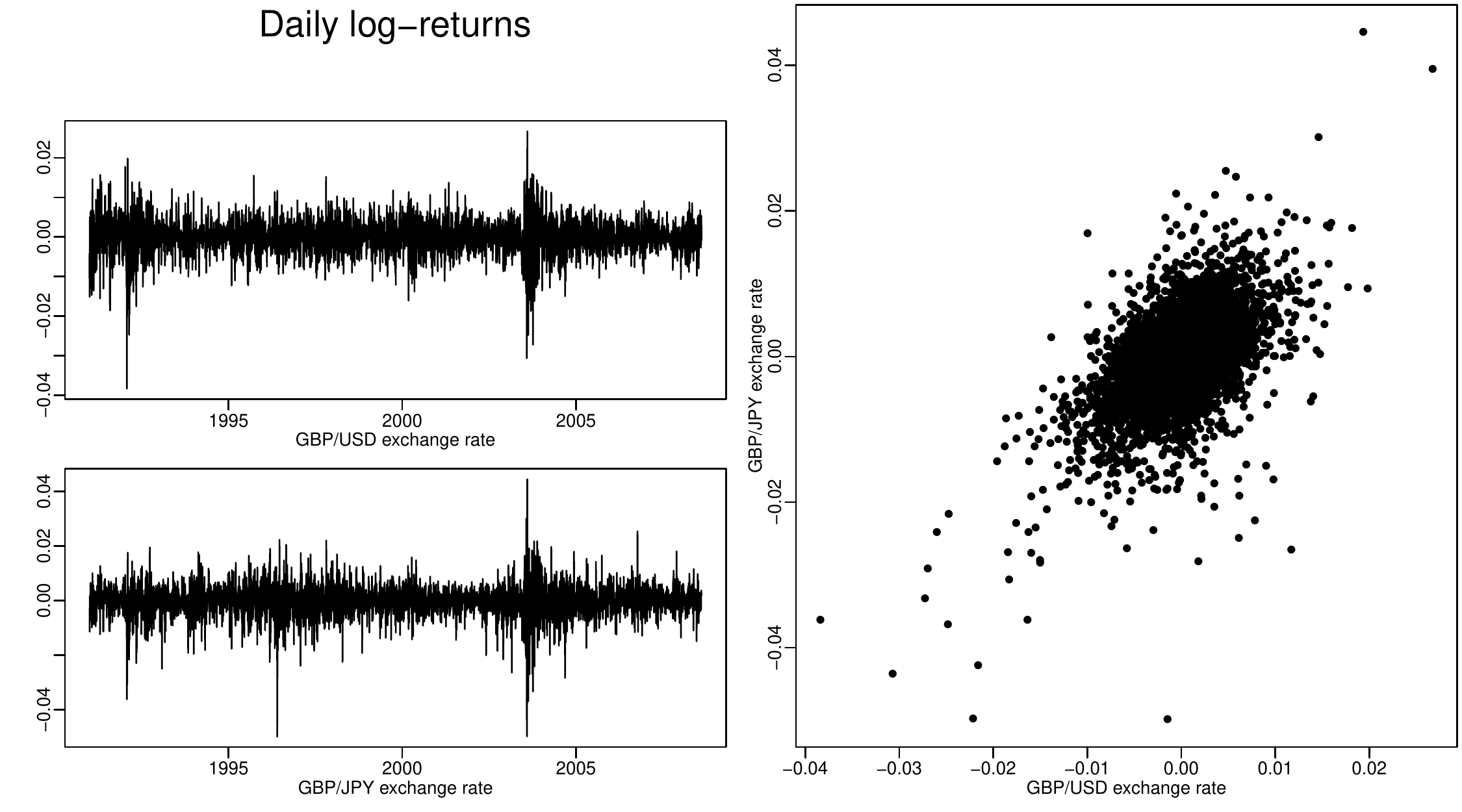}
\caption{Daily log-returns of GBP/USD and GBP/JPY exchange rates.}\label{lab:daily_logret_exc}
\end{figure}

The United States and Japan share some common features, such as the presence of titanic enterprises and a similar monetary policy, so a strong dependence between the exchange rates of the Pound Sterling against the US dollar (GBP/USD) and the Japanese yen (GBP/JPY) is to be expected. In fact, Figure \ref{lab:daily_logret_exc} shows a remarkable relation between the daily log-returns for this pair of exchange rates from March $1991$ to October 2015. Our interest
is in estimating extremely high (or low) joint levels of the exchange rates, thus we focus on monthly-maxima
of log-returns. An inspection of the data shows, for instance, that monthly-maxima often occur on the same day of the month. An adequate quantification of the dependence of the bivariate maxima is crucial for predicting future extremely high exchange
rates of GBP/JPY based on occurrences of extremely high exchange rates of GBP/USD, and vice versa. 
\begin{figure}[t!]
\centering
\includegraphics[width=.85\textwidth]{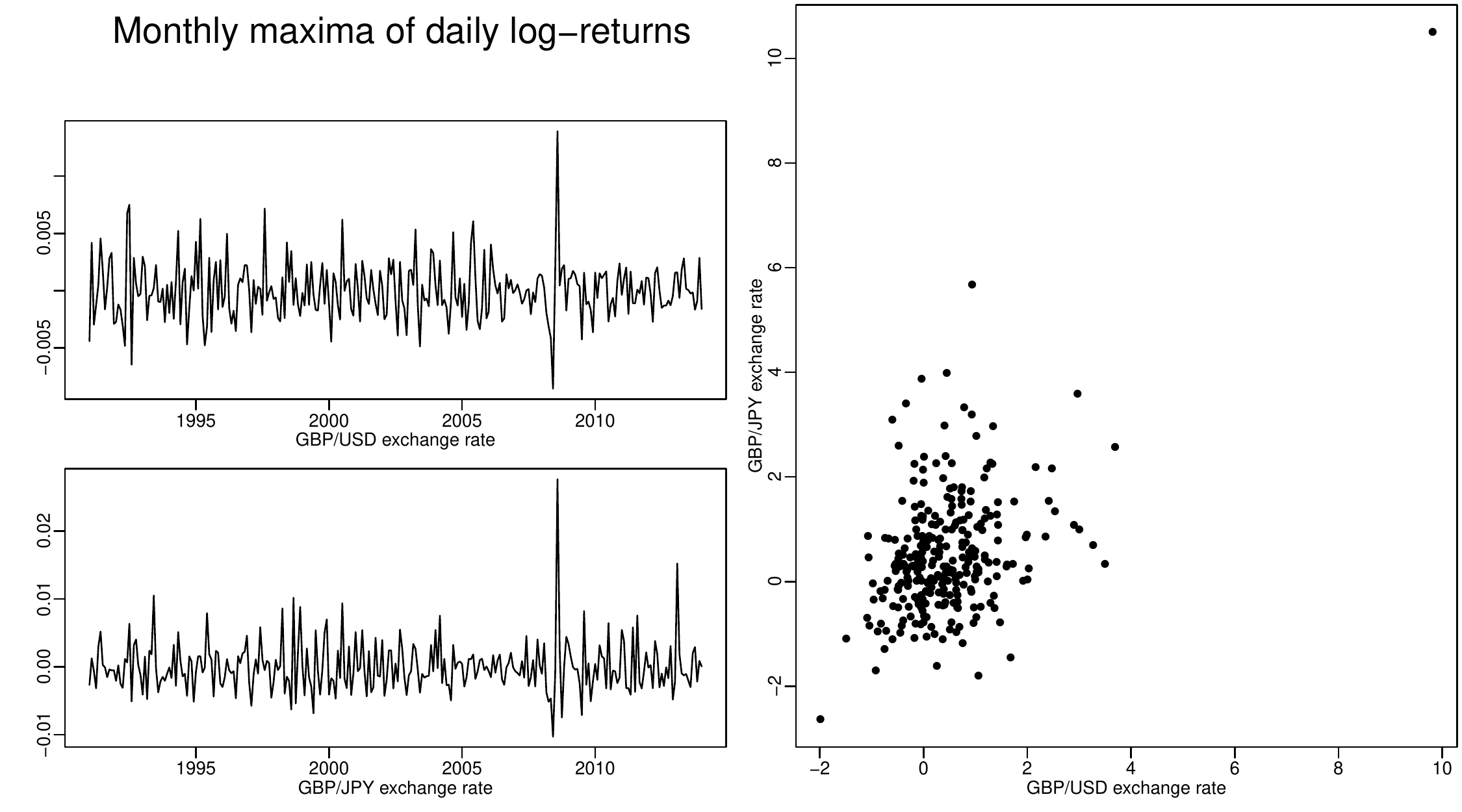}
\caption{Monthly-maxima of log-returns of GBP/USD and GBP/JPY exchange rates.}\label{lab:monthly_logret_exc}
\end{figure}
Figure \ref{lab:monthly_logret_exc} shows that an important degree of extremal dependence persists, even after removing the trend and seasonality from each of the monthly-maxima series. 
Firstly, we estimate the marginal GEV parameters of each series of residuals, by the maximum likelihood method.
The parameter estimates
for GBP/USD and GBP/JPY are $\mu_{1} = 0.0055$, $\sigma_{1}=0.0025$, $\xi_{1}=0.0249$ and $\mu_{2}=0.0068$, $\sigma_{2}=0.0030$, $\xi_{2}=0.1199$, respectively. 
Note that $\xi_{2}$ is higher than $\xi_{1}$. Since the shape parameter drives the heaviness of the tail,
the larger it is, the heavier the tail is, therefore the higher the marginal probability of observing extreme values is for GBP/JPY 
as opposed to GBP/USD.
Secondly, we transform the data to obtain unit Fr\'{e}chet margins, by means of transformation \eqref {eq:gev_frechet}
and using the estimated marginal parameters.
The data transformed in this way can be assumed to be a sample coming approximately from a bivariate max-stable distribution of the type \eqref{eq:bivgev}. 
\begin{figure}[t!]
\centering
\includegraphics[width=.33\textwidth, page=4]{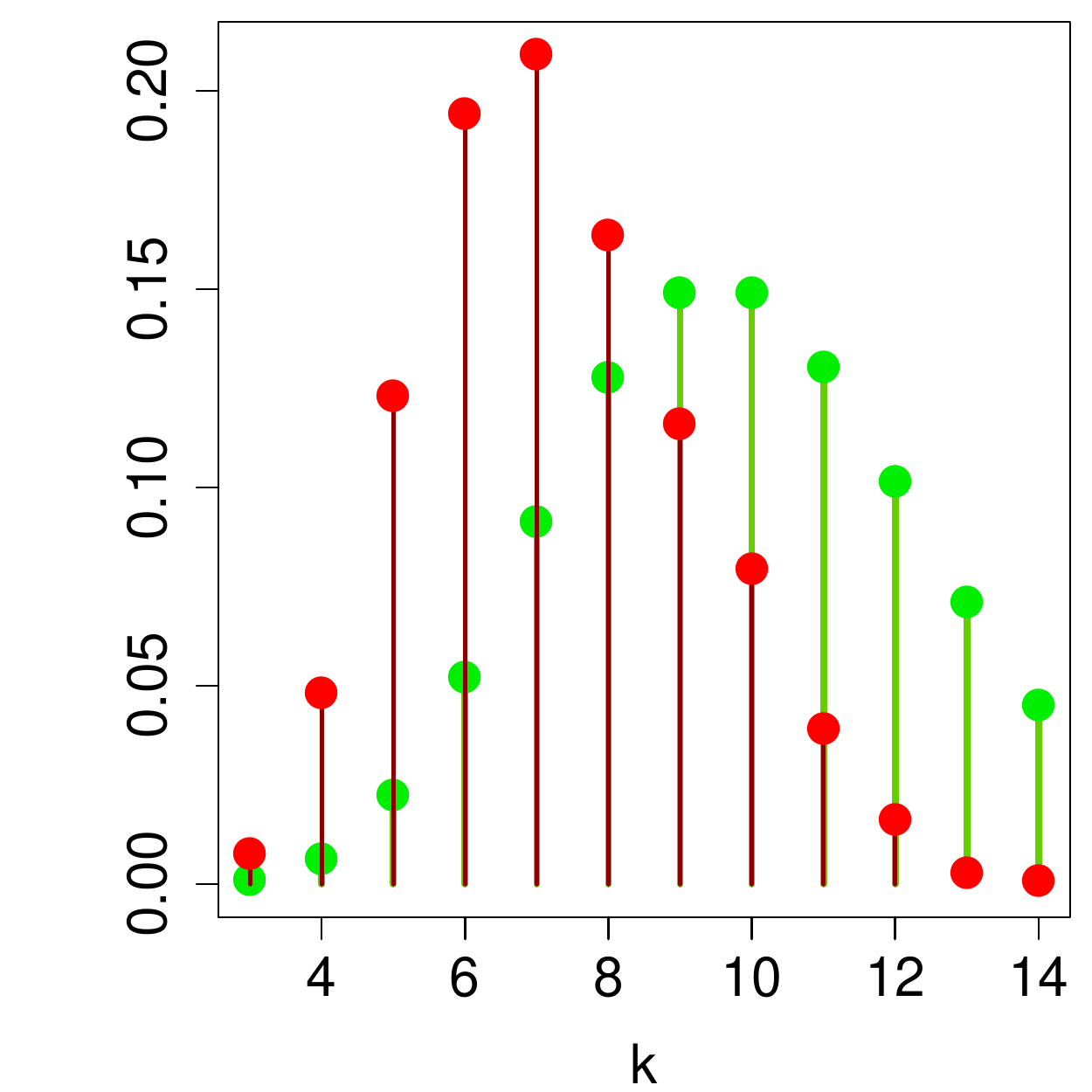}
\includegraphics[width=.33\textwidth, page=3]{resultsData_pois7}
\includegraphics[width=.33\textwidth, page=1]{resultsData_pois7}
\includegraphics[width=.33\textwidth]{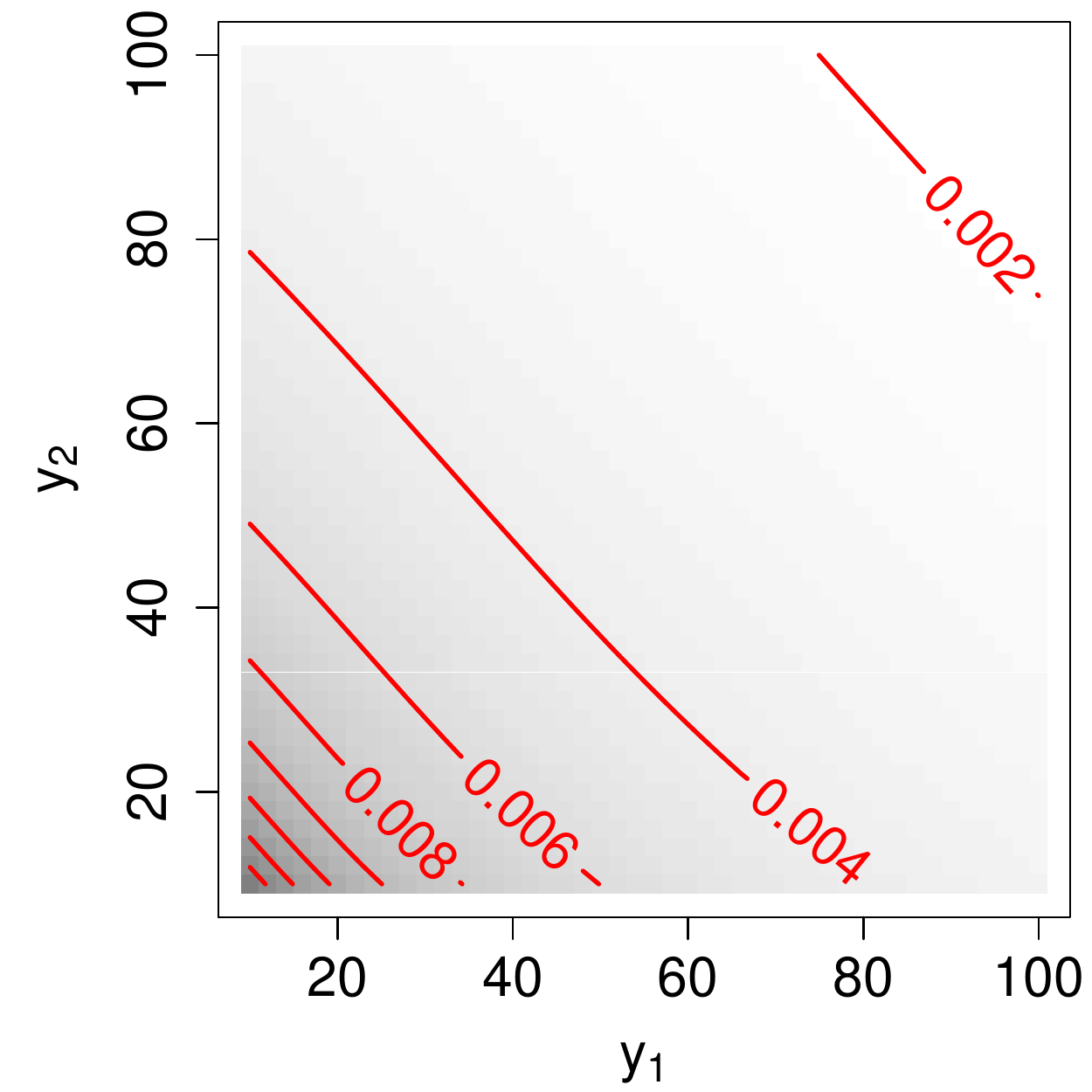}
\caption{Summary of the Bayesian nonparametric fitting of the extremal dependence for the 
monthly-maxima of GBP/USD and GBP/JPY log-returns of exchange rates.}
\label{fig:dataAnalysis}
\end{figure}
The extremal dependence of monthly-maxima of log-returns is then inferred by
using the method described in Section \ref{sec:bayeisan}. 

The set-up for computing the approximate posterior distributions is the same as that considered 
for the models illustrated in Figure \ref{fig:SL-HR-ET} of Section \ref{sec:simu}.
The summaries of results obtained from the posterior distribution are displayed in Figure \ref{fig:dataAnalysis}.
The first row reports the point-wise posterior means (red line) and $95\%$ credibility bands (in grey) of the angular density (left panel) and the Pickands dependence function (right panel).
The results regarding the Pickands dependence function suggest that the dependence structure is symmetric.
Results about both the Pickands dependence function and the angular density suggest a mild dependence, with posterior median values of the point masses $p_0$ and $p_1$ equal to $0.149$ and $0.093$, respectively.
The bottom-left panel of Figure \ref{fig:dataAnalysis} displays the prior (green dots) and posterior distributions (red dots) for the polynomial degree $k$, with median value $7$. The bottom-right panel displays the
predicted probabilities of joint exceedances, given by \eqref{lab:predictive}, 
for combinations of values ranging between $10$ and $100$. These results highlight that the probability of joint exceedances is also symmetric and hence the two variables can be considered exchangeable. However, as we have previously discussed,
the marginal distribution of monthly-maxima of GBP/USD log-returns is different from that of GBP/USD.
Therefore, bringing this small case study to a close,
we compute both conditional probabilities when the conditioning variable exceeds its 99$\%$ percentile, i.e. $\prob(\mbox{GBP/JPY} > q_{1} \, |\,   \mbox{GBP/USD} > q_{1})$ and $\prob(\mbox{GBP/USD} > q_{2} \, |\,   \mbox{GBP/JPY} > q_{2})$.
To do so, we proceed as follows. We calculate $q_{1}$ and $q_{2}$ as the $99\%$ percentiles of the marginal GEV distributions of log-returns of exchange rates GBP/USD and GBP/JPY, respectively, using the estimated marginal parameters.
These are equal to $q_{1} = 0.0162$ and $q_{2} = 0.0221$.
We transform the thresholds in order to represent them in unit-Fr\'{e}chet scale by
$$
y^{*}_{i,j} = \Bigg\{1 + \xi_{i} \bigg( \frac{q_{j} - \mu_{i}}{\sigma_{i}} \bigg) \Bigg\}_+^{(1/\xi_{i})}, \qquad i,j=1,2.
$$
Now, for $q_1$ we obtain the thresholds $y^{*}_{2,1} = 14.12$ and $y^{*}_{1,1} = 57.25$ and the
joint predictive probability \eqref{lab:predictive} is equal to 0.0050. 
Therefore, we obtain the final result
$\prob(\mbox{GBP/JPY} > q_{1} \, |\,   \mbox{GBP/USD} > q_{1}, )\approx\prob(Y_{2} > y^{*}_{2,1} \, |\,   Y_{1} > y^{*}_{1,1}) = 0.2880$.
Similarly, for $q_2$ we obtain the thresholds $y^{*}_{1,2} = 450.23$ and $y^{*}_{2,2} = 52.32$ and the
joint predictive probability \eqref{lab:predictive} is equal to 0.0007.
Therefore, we obtain the final result
$\prob(\mbox{GBP/USD} > q_{2} \, |\,   \mbox{GBP/JPY} > q_{2})\approx\prob(Y_{1} > y^{*}_{1,2} \, |\,   Y_{2} > y^{*}_{2,2}) = 0.0386$.

In conclusion, in contrast to the case of the joint exceedances, since the GBP/JPY tends to assume larger values than GBP/USD, then the conditional probability of the log-returns of GBP/USD given elevated values of log-returns of GBP/JPY is quite high.

%
\section*{Acknowledgements}\label{sec:ackn}
%

We thank a referee, an associate editor and the editor for useful suggestions and comments.

\newpage

\appendix

%
\section*{Appendix: Proofs}\label{sec:appA}
%

%
\begin{proof}[Proof of Proposition \eqref{prop:angular_dist}]

Using the identities in \eqref{eq:basis_conrners} we have that $H_{k-1}(0)=p_0$, $H_{k-1}(1)=1-p_1$, where 
the former is the atom in $\{0\}$ and $H_{k-1}(1)=\sup_{w\in[0,1)} H_{k-1}(w)$. As a result $H_{k-1}([0,1])-H_{k-1}(1)=p_1$, which is
the atom in $\{1\}$.

Second, for any $w_1\leq w_2\in [0,1)$ we have
$$
H_{k-1}([0,w_2])- H_{k-1}([0,w_1])=\sum_{j=0}^{k-1} (\eta_{j+1}-\eta_j) \int_{w_1}^{w_2} \betaf(v|j+1,k-j) \diff v \geq 0,
$$
where the inequality holds because by (R1) we have that $\eta_{j+1}-\eta_j\geq 0$, for $j=0,\ldots,k-1$ and therefore $H_{k-1}([0,w_1])\leq H_{k-1}([0,w_2])$.

Third, note that
\begin{equation}\label{eq:bpoly_mean_const1}
\int_0^1w\,h_{k-1}(w) \, \diff w + p_1 = p_1+ \frac{1}{{k}}\sum_{j=0}^{k-2} (\eta_{j+1}-\eta_{j})(j+1),
\end{equation}
and
\begin{equation}\label{eq:bpoly_mean_const2}
p_0+\int_0^1(1-w)\,h_{k-1}(w) \, \diff w = p_0 + \frac{1}{{k}}\sum_{j=0}^{k-2} (\eta_{j+1}-\eta_{j})(k-j-1).
\end{equation}
Equating \eqref{eq:bpoly_mean_const1} and \eqref{eq:bpoly_mean_const2} to $1/2$ we attain the
condition in (R2). Then $H_{k-1}$ satisfies the mean constraint (C1) by applying (R2) to its 
coefficients.
\end{proof}
\begin{proof}[Proof of Proposition~\ref{prop:equivalence_AH}]

By \eqref{eq:relation_pick_ang} we have that $H([0,w])=(A'(w)+1)/2$ for $w\in[0,1)$. Applying such a relationship between
$H_{k-1}$ and $A_k$ we attain

\begin{eqnarray*}\label{eq:rel_ang_pick}
H_{k-1}([0,w]) &=& \frac{1}{2}\left\{k\sum_{j=0}^{k-1}(\beta_{j+1}-\beta_{j})b_j(w;k-1) +1\right\}\\
&=& \sum_{j=0}^{k-1}\frac{1}{2}\left\{k(\beta_{j+1}-\beta_{j})+1\right\} b_j(w;k-1)\\
&=& \sum_{j=0}^{k-1}\eta_j b_j(w;k-1),
\end{eqnarray*}
where we have used the identity $\sum_{j\leq k-1} b_j(w;k-1)=1$. From the above formula
the result in \eqref{eq:etas} follows. On the other hand we have
\begin{eqnarray*}\label{eq:rel_pick_ang}
A'_k(t) &=& 2H_{k-1}([0,t]) -1\\
k\sum_{j=0}^{k-1}(\beta_{j+1}-\beta_{j})b_j(t;k-1) &=& \sum_{j=0}^{k-1} (2\eta_j-1 ) b_j(t;k-1)
\end{eqnarray*}
where the last identity holds if and only if $k(\beta_{j+1}-\beta_{j})=2\eta_j-1$ for all $j=0,\ldots k-1$.
Resolving for $\beta_{j+1}$ we attain the formula $\beta_{j+1}=\beta_j+(2\eta_j-1)/k$. From this
we get $\beta_1=(2\eta_0+k-1)/k$, for $j=0$, since $\beta_0=1$. Applying it recursively we get
$\beta_2=(2(\eta_0+\eta_1)+k-2)/k$, for $j=1$. Repeating this reasoning for $j=2,3,\ldots$ we attain
the general recursive formula in \eqref{eq:betas}. Thus, statement $i)$ is shown.

Consider $A_k$ in \eqref{eq:bpoly_picka} and assume it fulfills (R3)-(R5).
Then, we must check that $H_{k-1}$ in \eqref{eq:bpoly_angdist} with coefficients given by \eqref{eq:etas} 
fulfill (R1) and (R2). 

By \eqref{eq:etas} we have $\eta_0=k(\beta_1-1+1/k)/2$ and $\eta_{k-1}=k(1-\beta_{k-1}+1/k)/2$ for $j=0$ and $j=k-1$.
By (R4) we have therefore that $\eta_0=p_0$ and $\eta_{k-1}=1-p_1$ .
Next it needs to be shown that $\eta_j\leq \eta_{j+1}$ for $j=0,\ldots,k-2$. By \eqref{eq:etas} this inequality is equal to
$k(\beta_{j+1}-\beta_{j}+1/k)/2 \leq k(\beta_{j+2}-\beta_{j+1}+1/k)/2$ for $j=0,\ldots,k-2$. This holds if and only if
$\beta_{j+2}-2\beta_{j+1}+\beta_j\geq0$ and this is true by (R5). Thus $H_{k-1}$ fulfills (R1). 

It remains to show that $\eta_0+\cdots+\eta_{k-1}=k/2$. By \eqref{eq:etas} with a few steps we attain
\begin{eqnarray*}\label{eq:mean_con_verify}
\frac{k}{2}+p_1-1-p_0&=&\sum_{j=1}^{k-2}\eta_j\\
&=& \sum_{j=1}^{k-2}\left( \frac{k(\beta_{j+1}-\beta_{j} +1/k)}{2}\right)\\
&=&\frac{k}{2} -1 +\frac{k}{2}\sum_{j=1}^{k-2} (\beta_{j+1} - \beta_{j} ) 
\end{eqnarray*}
and from the last identity we obtain 
$
2(p_1-p_0)=\beta_{k-1}-\beta_1.  
$
By (R4) it is straightforward to check that the last equation holds. Therefore $H_{k-1}$ fulfills also (R2) and it is the distribution
of a valid angular measure.

Now, consider $H_{k-1}$ in \eqref{eq:bpoly_angdist} and assume it fulfills (R1) and (R2).
Then, we must check that $A_k$ in \eqref{eq:bpoly_picka} with coefficients given by \eqref{eq:betas}
fulfill (R3)-(R5).

Applying \eqref{eq:betas} with $j=k-1$ we have that $\beta_k=1$ and this is attained using the condition (R2). Next, it needs to be shown
that $\beta_{j+1}\leq 1$ for any $j=0,\ldots,k-1$. 
Applying \eqref{eq:betas} to check that such inequalities hold is equivalent to checking that $\sum_{i\leq j}\eta_i \leq (j+1)/2$ for any $j=0,\ldots,k-1$.
Thus, when $j=0$ we have $\eta_0\leq 1/2$ and this holds since that $\eta_0=p_0$ by (R1) and $p_0\in [0,1/2]$ by Assumption \ref{ass:measure_cond}. 
For any $j=1,\ldots,k-2$ suppose on the contrary that $(\eta_0+\cdots+\eta_j)>(j+1)/k$. 
From this and taking into account (R1) and that $p_1\in[0,1/2]$  by Assumption \ref{ass:measure_cond}, it follows the contradiction that (R2) is not valid. 
As a consequence the opposite inequalities hold.
Since $\beta_0=1$ by definition, then $A_k$ fulfills (R3).

By \eqref{eq:betas}, for $j=0$ and $j=k-2$,
we derive with some manipulations $\beta_1=(2p_0 + k -1)/k$ and $\beta_{k-1}=(2p_1+k-1)/k$. 
These results are attained by using (R1) and (R2), respectively. Therefore $A_k$ fulfills (R4).

It remains to show that $\beta_{j+2}-2\beta_{j+1}+\beta_j\geq0$ for all $j=0,\ldots k-2$. Applying \eqref{eq:betas} and with some manipulations
we have 
\begin{eqnarray*}\label{eq:convexity_verify}
0&\leq&\beta_{j+2}-2\beta_{j+1}+\beta_j \\
&\leq&\frac{1}{k}\left(2\sum_{i=0}^{j+1}\eta_i+k-j-2\right)-
\frac{2}{k}\left(2\sum_{i=0}^{j}\eta_i+k-j-1\right)+
\frac{1}{k}\left(2\sum_{i=0}^{j-1}\eta_i+k-j\right)\\
&\leq&\frac{2}{k}(\eta_{j+1}-\eta_j)
\end{eqnarray*}
for $j=0,\ldots k-2$. The last inequality holds since $\eta_j\leq \eta_{j+1}$ for $j=0,\ldots k-2$ by (R1).
Therefore $A_k$ fulfills also (R5) and is a valid Pickands dependence function. Then the proof is concluded.
\end{proof}
\begin{proof}[Proof of Proposition~\ref{prop:approximation_H_A}]

The fact that $\spA_k$, $k=1,2,\ldots$ is nested in $\spA$ has been shown by Proposition 3.3 in \shortciteN{marcon+p+n+m15}.
Here we only need to show that $A_{k+1}(t)$ satisfies the conditions (R2), where 
$$
A_{k+1}(t)=\sum_{j=0}^{k+1}\beta^*_j \; b_j(t;k+1),\quad \beta^{*}_j=\left(\beta_j\frac{k+1-j}{k+1}+\beta_{j-1}\frac{j}{k+1}\right).
$$
Applying the above formula we have $\beta^*_1=(k\beta_1+\beta_0)/(k+1)$ and
$\beta^*_k=(\beta_k+k\beta_{k-1})/(k+1)$. Substituting with $\beta_0=\beta_{k}=1$,
$\beta_1=(2p_0+k-1)/k$ and $\beta_{k-1}=(2p_1+k-1)/k$, we obtain $\beta^*_1=(2p_0+k)/(k+1)$ and $\beta^*_k=(2p_1+k)/(k+1)$.
Therefore, the result is shown. We now show that also $\spH_k$, $k=1,2,\ldots$ is nested in $\spH$. Let
$$
H_{k}(w)=\sum_{j=0}^k\eta^{*}_j \; b_j(w;k), \quad \eta^{*}_j=\eta_j\frac{k-j}{k}+\eta_{j-1}\frac{j}{k}.
$$
We can verify that $\eta^{*}_j\leq\eta^{*}_{j+1}$, for $j=0,\ldots,{k-1}$. Using the definition of $\eta_j^{*}$ we obtain
\begin{eqnarray*}
-\frac{j}{k}\left(\eta_j-\eta_{j-1}\right)&\leq& \frac{k-j-1}{k} \left(\eta_{j+1}-\eta_{j}\right)
\end{eqnarray*}
and the left-hand and right-hand side of the above inequality is always negative and positive, respectively, by
(R1). Therefore, also $H_{k}$ satisfies condition (R1). Furthermore,  we have
\begin{eqnarray*}
\frac{k+1}{2}&=& \sum_{j=0}^k\eta^{*}_j\\
&=& \sum_{j=0}^k\left(\eta_j\frac{k-j}{k}+\eta_{j-1}\frac{j}{k}\right)\\
&=& \sum_{j=0}^{k-1}\eta_j +\frac{1}{2},
\end{eqnarray*}
where the last equation holds by (R2). As a consequence also $H_{k}$ satisfies conditions in (R2)
and hence $\spH_k$, $k=1,2,\ldots$, is nested in $\spH$.

Now, let 
$$ 
B_A(w;k)=\sum_{j=0}^k A\left(\frac{j}{k}\right)b_j(w;k),\quad k=1,2,\ldots,
$$
then, by Proposition 3.1 in \shortciteN{marcon+p+n+m15} we have
$$
\sup_{w\in[0,1]}| B_A(w;k) - A(w)|\leq \frac{1}{2\sqrt{k}}.
$$
Therefore, by Proposition 3.3 in \shortciteN{marcon+p+n+m15} the result in \eqref{eq:convergence_A} follows.
Next, consider $H_{k-1}$ as in \eqref{eq:bpoly_angdist}, where $H_{k-1}(w)=(A'_k(w)+1)/2$ for $w\in[0,1)$ and $A_k$ as 
in \eqref{eq:bpoly_picka}, satisfying (R3)-(R5). Then, $H_{k-1}\in \spH$ by Proposition \ref{prop:equivalence_AH} and $\spH_k$ is nested
in $\spH$ as has been shown above. Furthermore,
let $\tilde{B}_H(w;k-1)=(B'_A(w;k)+1)/2$ for $w\in\spa$, then
$$
| \tilde{B}_H(w;k-1) - H(w)|=|B'_A(w;k) - A'(w)|, \quad w\in \spa.
$$
As a consequence the result \eqref{eq:convergence_A} implies 
that also result \eqref{eq:convergence_H} holds, by the uniform convergence of the first derivative of
convex functions (see Theorem 25.7 in \citeNP{rockafellar2015}).
\end{proof}
\bibliographystyle{chicago}
\bibliography{bernpoly}
\end{document}